\documentclass[a4paper,UKenglish]{lipics}

\usepackage{cite}
\usepackage{microtype}
\usepackage{latexsym}                            
\usepackage{t1enc}
\usepackage{xspace}                               
\usepackage{pdfpages}
\usepackage{color}
\usepackage{ifpdf}
\usepackage{graphicx}
\usepackage{pgf}
\usepackage{tikz}
\usepackage{pstricks}
\usepackage{mathrsfs}
\usepackage{amsfonts}
\usepackage{amsmath}
\usepackage{amssymb}
\usepackage{stmaryrd}
\usepackage{verbatim}
\usepackage{dsfont}
\usepackage{wasysym}
\usepackage{esvect}
\usepackage{url}
\usepackage{subfig}
\usepackage{paralist}
\usepackage{tabularx}
\usepackage{listings}
\usepackage{epic}

\pagestyle{empty} 
\sloppy
\bibliographystyle{plain}


\title{Confluence of layered rewrite systems}

\author[1,2]{Jiaxiang Liu}
\author[2]{Jean-Pierre Jouannaud}
\author[3]{Mizuhito Ogawa}

\affil[1]{School of Software, and TNList, Tsinghua University, Beijing, China}
\affil[2]{Deducteam, INRIA, and LIX, \'{E}cole Polytechnique, Palaiseau, France}
\affil[3]{School of Information Science, JAIST, Ishikawa, Japan}
\authorrunning{J. Liu, J.-P. Jouannaud and M. Ogawa} 
\Copyright{Jiaxiang Liu, Jean-Pierre Jouannaud and Mizuhito Ogawa}

\subjclass{ F.4.2 [Logic and computation]: Rewriting}
\keywords{Layers, confluence, decreasing diagrams, critical pairs, cyclic unification}

\serieslogo{}
\volumeinfo
  {Billy Editor and Bill Editors}
  {2}
  {Conference title on which this volume is based on}
  {1}
  {1}
  {1}
\EventShortName{}
\DOI{10.4230/LIPIcs.xxx.yyy.p}

\begin{document}

\maketitle
\thispagestyle{empty}
\newcommand{\Note}[1]{#1}

\newcommand{\Not}{\neg}
\newcommand{\Or}{\vee}
\newcommand{\AND}[2]{\mathop{\bigwedge_{#1}^{#2}}}
\newcommand{\OR}[2]{\mathop{\bigvee_{#1}^{#2}}}
\newcommand{\bottom}{\bot}
\newcommand{\vide}{\emptyset}
\newcommand{\cqfd}{\hfill $\Box$}
\newcommand{\lin}[1]{\overline{#1}}

\newcommand{\Coq}{{\sc Coq}\xspace}
\newcommand{\CoqMT}{{\sc CoqMT}\xspace}
\newcommand{\CoqMTU}{{\sc CoqMTU}\xspace}


\newcommand{\cA}{\mathcal{A}}
\newcommand{\cB}{\mathcal{B}}
\newcommand{\cC}{\mathcal{C}}
\newcommand{\cD}{\mathcal{D}}
\newcommand{\cE}{\mathcal{E}}
\newcommand{\cF}{\mathcal{F}}
\newcommand{\cI}{\mathcal{I}}
\newcommand{\cJ}{\mathcal{J}}
\newcommand{\cK}{\mathcal{K}}
\newcommand{\cL}{\mathcal{L}}
\newcommand{\cM}{\mathcal{M}}
\newcommand{\cN}{\mathcal{N}}
\newcommand{\cO}{\mathcal{O}}
\newcommand{\cP}{\mathcal{P}}
\newcommand{\cQ}{\mathcal{Q}}
\newcommand{\cR}{\mathcal{R}}
\newcommand{\cS}{\mathcal{S}}
\newcommand{\cT}{\mathcal{T}}
\newcommand{\cU}{\mathcal{U}}
\newcommand{\cV}{\mathcal{V}}
\newcommand{\cW}{\mathcal{W}}
\newcommand{\cX}{\mathcal{X}}
\newcommand{\cY}{\mathcal{Y}}

\newcommand{\der}{\longleftarrow\!\!}
\newcommand{\lrder}[1]{\mathop{\longrightarrow\!\!\!\!\!\ra}^{#1}}
\newcommand{\lrdertwo}[2]{\mathop{\longrightarrow\!\!\!\!\!\ra\!}^{#1}_{#2}}
\newcommand{\dlrder}[2]{\displaystyle \lrdertwo{#1}{#2}}
\newcommand{\rlder}[1]{~^{#1}\!\!\mathop{\la\!\!\!\!\!\longleftarrow}}
\newcommand{\rldertwo}[2]{\!~^{#1}_{#2}\!\!\!\mathop{\la\!\!\!\!\!\longleftarrow}}
\newcommand{\drlder}[2]{\displaystyle \rldertwo{#1}{#2}}

\newcommand{\lablrps}[4]{\mathop{\longrightarrow}^{#1,#2}_{#3,#4}}
\newcommand{\dlablrps}[4]{\displaystyle \lablrps{#1}{#2}{#3}{#4}}
\newcommand{\lablrpps}[4]{\mathop{\Longrightarrow}^{#1,#2}_{#3,#4}}
\newcommand{\dlablrpps}[4]{\displaystyle \lablrpps{#1}{#2}{#3}{#4}}

\newcommand{\lrps}[2]{\mathop{\longrightarrow}^{#1}_{#2}}
\newcommand{\dlrps}[2]{\displaystyle \lrps{#1}{#2}}
\newcommand{\rlps}[2]{~{}^{#1}_{#2}\!\!\mathop{\longleftarrow}}
\newcommand{\rlpsbis}[2]{\mathop{\longleftarrow}^{#1}_{#2}}
\newcommand{\drlps}[2]{\displaystyle \rlpsbis{#1}{#2}}

\newcommand{\Lrps}[2]{\mathop{\Longrightarrow}^{#1}_{#2}}
\newcommand{\dLrps}[2]{\displaystyle {\Lrps{#1}{#2}}}
\newcommand{\Rlps}[2]{\mathop{\Longleftarrow}^{#1}_{#2}}
\newcommand{\dRlps}[2]{\displaystyle {\Rlps{#1}{#2}}}
\newcommand{\Lrpseq}[2]{\mathop{\Longrightarrow\!\!\!\!\!\!\!-}^{#1}_{#2}}
\newcommand{\dLrpseq}[2]{\displaystyle {\Lrpseq{#1}{#2}}}
\newcommand{\Rlpseq}[2]{\mathop{-\!\!\!\!\!\!\!\Longleftarrow}^{#1}_{#2}}
\newcommand{\dRlpseq}[2]{\displaystyle {\Rlpseq{#1}{#2}}}

\newcommand{\eqps}[1]{\mathop{\leftrightarrow}^{#1}}
\newcommand{\deqps}[1]{\displaystyle \eqps{#1}}
\newcommand{\leqps}[2]{~\eqps{#1}{#2}~}
\newcommand{\leqpps}[2]{~\Eqps{#1}{#2}~}

\newcommand{\convv}[2]
           {\mathop{\la\!\!\!\!\!\leftrightarrow\!\!\!\!\!\ra}^{#1}_{#2}}
\newcommand{\dconv}[1]{\displaystyle \convv{#1}{}}
\newcommand{\dconvtwo}[2]{\displaystyle \convv{#1}{#2}}

\newcommand{\oc}{>_{oc}} 

\newcommand{\lrpseq}[1]{\mathop{\longrightarrow\!\!\!\!\!\!\!{\tiny =}}^{#1}}
\newcommand{\rlpseq}[1]{\mathop{{}^{#1}{\tiny =}\!\!\!\!\!\!\!\longleftarrow}}
\newcommand{\lrpseqtwo}[2]{\mathop{\longrightarrow\!\!\!\!\!\!\!{\tiny =}}^{#1}_{#2}}
\newcommand{\rlpseqtwo}[2]{\mathop{{\tiny =}\!\!\!\!\!\!\!\longleftarrow}^{#1}_{#2}}
\newcommand{\lrppseq}[1]{\mathop{\Longrightarrow\!\!\!\!\!\!\!{\tiny =}}^{#1}}
\newcommand{\rlppseq}[1]{\mathop{{\tiny =}\!\!\!\!\!\!\!\Longleftarrow}^{#1}}
\newcommand{\dlrpseq}[2]{{\displaystyle \lrpseqtwo{#1}{#2}}}
\newcommand{\drlpseq}[2]{{\displaystyle \rlpseqtwo{#1}{#2}}}


\newcommand{\Eqps}[2]{\mathop{\Longleftrightarrow}^{#1}_{#2}}
\newcommand{\LrpsD}{\Longrightarrow_\cD}
\newcommand{\RlpsD}{_\cD\!\!\Longleftarrow}

\newcommand{\la}{\leftarrow}
\newcommand{\ra}{\rightarrow}
\newcommand{\da}{\downarrow}
\newcommand{\ua}{\uparrow}
\newcommand{\lra}{\leftrightarrow}
\newcommand{\lh}{\leftharpoons}
\newcommand{\rh}{\rightharpoons}
\newcommand{\rlh}{\rightleftharpoons}
\newcommand{\Lla}{\longleftarrow}
\newcommand{\Lra}{\longrightarrow}
\newcommand{\Llra}{\longleftrightarrow}
\newcommand{\Dla}{\Leftarrow}
\newcommand{\Dra}{\Rightarrow}
\newcommand{\Dda}{\Downarrow}
\newcommand{\Dua}{\Uparrow}
\newcommand{\Dlra}{\Leftrightarrow}
\newcommand{\Nla}{\not\rightarrow}
\newcommand{\Nra}{\not\leftarrow}
\newcommand{\raa}{\ra_{aliens}}
\newcommand{\rac}{\ra_{cap}}
\newcommand{\raA}{\ra_{A}}
\newcommand{\raC}{\ra_{C}}

\newcommand{\TFX}{\cT(\cF,\cX)}
\newcommand{\GTF}{\cT(\cF)}

\newcommand{\Head}[1]{{\cH}ead({#1})}
\newcommand{\Var}[1]{{\cV}ar({#1})}
\newcommand{\FVar}[1]{{\cFV}ar({#1})}
\newcommand{\Pos}[1]{{\cP}os({#1})}
\newcommand{\FPos}[1]{{\cF\cP}os({#1})}
\newcommand{\VPos}[1]{{\cV\cP}os({#1})}
\newcommand{\Posv}[2]{{\cP}os_{#1}(#2)}
\newcommand{\rootp}{\mbox {\footnotesize $\Lambda$}}
\newcommand{\para}{\parallel}
\newcommand{\Dom}[1]{{\cD}om({#1})}
\newcommand{\FDom}[1]{{\cFD}om({#1})}
\newcommand{\Ran}[1]{{\cR}an({#1})}

\newcommand{\gtpo}{\succ}
\newcommand{\gepo}{\succeq}
\newcommand{\ltpo}{\prec}
\newcommand{\lepo}{\preceq}
\newcommand{\eqpo}{\simeq}

\newcommand{\gtsubt}{\rhd}
\newcommand{\gesubt}{\unrhd}
\newcommand{\gsubt}{\unrhd}
\newcommand{\sgsubt}{\rhd}
\newcommand{\lsubt}{\unlhd}
\newcommand{\ltsubt}{\lhd}
\newcommand{\lesubt}{\unlhd}
\newcommand{\slsubt}{\lhd}
\newcommand{\gesubs}{\mathop{\stackrel{\scriptscriptstyle\bullet}{}\!\!\!\geq}}
\newcommand{\gsubsq}{\q{\gsubs}}

\newcommand{\nf}[4]{#1{{}_{#3}^{}\!\!\downarrow^{#2}_{#4}}}


\newcommand{\hide}[1]{\ignorespaces}
\renewcommand{\displaystyle}{}
\renewcommand{\subsection}[1]{\paragraph*{\bf #1}}
\renewcommand{\footnote}[1]{ (#1)}
\renewcommand{\longrightarrow}{\rightarrow}
\renewcommand{\Longrightarrow}{\Rightarrow}
\renewcommand{\longleftarrow}{\leftarrow}
\renewcommand{\Longleftarrow}{\Leftarrow}
\let\ep=\endproof

\newcommand{\jp}[1]{{\bf JP:}\quad #1 \quad{\bf end}}
\newcommand{\jpj}[1]{{\bf JP:}\quad #1 \quad{\bf end}}
\newcommand{\jx}[1]{{\bf Jiaxiang:}\quad #1 \quad{\bf end}}
\newcommand{\mo}[1]{{\bf Mizuhito:}\quad #1 \quad{\bf end}}

\newcommand{\cFD}{\cF\cD}
\newcommand{\bh}{\cB\cH}
\newcommand{\family}{\cF y}
\newcommand{\vect}[1]{\vec{#1}}
\newcommand{\nat}{\mbox{$I\!\!N$}}
\newcommand{\join}{\rhd\!\!\!\lhd}
\newcommand{\eq}{\stackrel{?}{=}}
\newcommand{\ov}[1]{\overline{#1}}

\newcommand{\Rt}{\ensuremath{R_{\textit{T}}}\xspace}
\newcommand{\Rnt}{\ensuremath{R_{\textit{NT}}}\xspace}
\newcommand{\Rc}{\mbox{$R_{C}$}\xspace}
\newcommand{\nfRc}[1]{\, #1 \!\!\da_{\Rc}}
\newcommand{\nfRt}[1]{\, #1 \!\!\da_{\Rt}}
\newcommand{\R}{\ensuremath{R}\xspace}
\newcommand{\Rtmod}{\ensuremath{{\Rt}_{sub}}\xspace}
\newcommand{\Rmod}{\ensuremath{\Rtmod\cup\Rnt}\xspace}
\newcommand{\Rsub}{\ensuremath{R_{\textit{R}}}}

\newcommand{\Ft}{\mbox{$F_{\textit{T}}$}\xspace}
\newcommand{\Fnt}{\mbox{$F_{\textit{NT}}$}\xspace}
\newcommand{\F}{\mbox{$F$}\xspace}
\newcommand{\Fc}{\mbox{$F_{\textit{C}}$}\xspace}
\newcommand{\Ftc}{\mbox{$F_{\textit{T} \setminus \textit{C}}$}\xspace}
\newcommand{\Fntc}{\mbox{$F_{\textit{NT} \setminus \textit{C}}$}\xspace}
\newcommand{\Tt}{\mbox{$\cT(\Ft,\cX)$}\xspace}
\newcommand{\TtY}{\ensuremath{\cT(\Ft,\cX\cup\cY)}\xspace}
\newcommand{\Tnt}{\mbox{$\cT(\Fnt,\cX)$}\xspace}
\newcommand{\TntY}{\ensuremath{\cT(\Fnt,\cX\cup\cY)}\xspace}
\newcommand{\Tc}{\mbox{$\cT(\Fc,\cX)$}\xspace}
\newcommand{\TcY}{\ensuremath{\cT(\Fc,\cX\cup\cY)}\xspace}
\newcommand{\T}{\mbox{$\cT(\F,\cX)$}\xspace}
\newcommand{\Ty}{\ensuremath{\cT(\F,\cX\cup\cY)}\xspace}
\newcommand{\Tn}{\mbox{$\cT_n(\cF,\cX)$}\xspace}

\newcommand{\po}{\mbox{$\succ\!\!\succ$}}
\newcommand{\ordo}{\succ}

\newcommand{\gtordl}{\rhd}
\newcommand{\ordl}{\gtordl}
\newcommand{\geordl}{\unrhd}
\newcommand{\ltordl}{\lhd}
\newcommand{\leordl}{\unlhd}

\newcommand{\gtordp}{>}
\newcommand{\ordp}{\gtordp}
\newcommand{\geordp}{\geq}
\newcommand{\ltordp}{<}
\newcommand{\leordp}{\leq}

\newcommand{\gtsubs}{\gtrdot}   
\newcommand{\gtnat}{>}
\newcommand{\ltnat}{<}
\newcommand{\genat}{\geq}
\newcommand{\lenat}{\leq}

\renewcommand{\t}{{\sc t}\xspace}
\newcommand{\nt}{{\sc nt}\xspace}
\renewcommand{\it}{{\sc T}\xspace}
\renewcommand{\int}{{\sc NT}\xspace}

\newcommand{\CPost}[1]{{\cal CP}os_\it({#1})}
\newcommand{\CPosnt}[1]{{\cal CP}os_\int({#1})}
\newcommand{\APost}[1]{{\cal AP}os_\it({#1})}
\newcommand{\APosnt}[1]{{\cal AP}os_\int({#1})}
\newcommand{\CPos}[1]{{\cal CP}os({#1})}
\newcommand{\APos}[1]{{\cal AP}os({#1})}
\newcommand{\rank}[1]{rk({#1})}

\newcommand{\myhat}[1]{\widehat{#1}}
\newcommand{\myinfhat}[1]{\myhat{#1}}
\newcommand{\ehat}[3]{\stackrel{\triangle {#3}}{{#1}_{#2}}}
\newcommand{\mycapt}[1]{\overline{#1}^{\it}}
\newcommand{\mycapnt}[1]{\overline{#1}^{\int}}
\newcommand{\mycap}[1]{\overline{#1}}
\newcommand{\ecap}[2]{\stackrel{=}{#1}_{#2}}
\newcommand{\alienst}[1]{\cA_\it(#1)}
\newcommand{\aliensnt}[1]{\cA_\int(#1)}
\newcommand{\aliens}[1]{\cA(#1)}
\newcommand{\ealiens}[2]{\overline{\cA(#1)}_{#2}}

\newcommand{\interp}[1]{\mbox{$[\!\mid \!\! #1 \!\! \mid\!]$}}
\newcommand{\lp}[1]{#1_{peak}}
\newcommand{\lc}[1]{#1_{conv}}

\newenvironment{myproof}[1]
{{\color{darkgray}\sffamily\bfseries Proof.} #1}

\begin{abstract}
We investigate the new, Turing-complete class of \emph{layered}
systems, whose lefthand sides of rules can only be overlapped at a
multiset of disjoint or equal positions. Layered systems define a
natural notion of rank for terms: the maximal number of
non-overlapping redexes along a path from the root to a leaf.
Overlappings are allowed in finite or infinite trees. Rules may be
non-terminating, non-left-linear, or non-right-linear.  Using a novel
unification technique, \emph{cyclic unification}, we show that rank
non-increasing layered systems are confluent provided their cyclic
critical pairs have cyclic-joinable decreasing diagrams.
\end{abstract}

\hide{
I focus on the new conditions for the cyclic-joinability of the cyclic critical 
pairs and also the new proof of the main theorem (critical case) for the 
moment. I think it is a priority to make sure this part is correct. I have 
3 comments for this:

(i) in Def-34, I think the additional condition should be:
``If Var(\lin{l}|_p) \subset Var(\lin{l}), then there is no i-facing step in the 
diagram, and each rewrite step u ->^q_k v of the diagram should further 
satisfy Var(u|_q) \subseteq Var(g\eta_S) or k < i . ``
There are 3 differences with the current version in the paper:
(a) Var(\lin{l}|_p) \subset Var(\lin{l}) should be used to ensure that, in the 
first case of Case 3 of proof of TH-35, there exists no variable in l[]_p, 
such that the ranks of equalization steps would not be larger than m, 
and m = n + 1.
(b) The absence of i-facing step should be imposed. Since the whole 
right part of the diagram in the proof should be put into the bottom part 
of DD, if i-facing step exists and satisfies the variable condition, it may be 
strictly smaller than neither < n+1 , i > nor < m , j > .
(c) Var(g\eta_S) should be used instead of Var(l|_p\eta_S). I think this 
variable condition is to ensure the rank of the instantiated step is smaller 
than or equal to m. Notice that Var(g\eta_S) is not the same as Var(l|_p\eta_S), 
even though g\eta_S =^{cc}_{R_S\eta_S} l|_p\eta_S. 

(ii) In the first case of Case 3 of proof of TH-35, I think it is important to 
point out that m = n+1 .

(iii) In the second case of Case 3 of proof of TH-35, not all steps are strictly 
smaller than < n+1 , i >. The fact is, in the joinability diagram of the critical 
pair, 
- the steps which are < i : they are all strictly smaller than < n+1 , i >;
- the steps which are < j : if they satisfy the variable condition, they are 
strictly smaller than < m , j >; if they satisfy the k<i condition, they are 
strictly smaller than < n+1 , i >;
- the j-facing step : if it satisfies the variable condition, it is smaller than or 
equal to < m , j > ; if it satisfies the k<i condition, i.e. j<i, it is strictly smaller 
than < n+1 , i >.
As for the whole diagram, the whole right part is belonging to the bottom 
part of the DD, while the left part may have facing step. 
}

\section{Introduction}
\label{s:intro}
\hide{
A rewrite peak (resp., local peak) is a triple $\langle s,u,t\rangle$
such that $u$ rewrites (resp., in one step) to both $s$ and $t$.  A
rewrite peak is joinable if $s$ and $t$ rewrite to some common $v$.  A
rewrite system is confluent if all rewrite peaks can be
joined. Confluence is a major property of rewrite systems of all
kinds.
}

Confluence of terminating systems is well understood: it can be
reduced to the joinability of local peaks by Newman's lemma, and to
that of critical ones, obtained by unifying lefthand sides of rules at
subterms, by Knuth-Bendix-Huet's lemma.  Confluence can thus be decided
by inspecting all critical pairs, see for example~\cite{DBLP:books/el/leeuwen90/DershowitzJ90}.

Many efforts
notwithstanding~\cite{DBLP:journals/jacm/Rosen73,DBLP:journals/jacm/Huet80,DBLP:conf/ctrs/ToyamaO94,MOOO97,DBLP:conf/rta/Okui98,DBLP:journals/tcs/ManoO01,DBLP:conf/rta/Oostrom08,gomi,DBLP:conf/icalp/JouannaudO09,DBLP:journals/ipl/SakaiO10,DBLP:journals/jar/HirokawaM11,DBLP:conf/rta/ZanklFM11,DBLP:conf/lpar/KleinH12,SOO15,DBLP:conf/birthday/LiuJ14,DBLP:conf/rta/AotoTU14},
confluence of non-terminating systems is far from being understood in
terms of critical pairs. Only recently did this question make
important progress with van Oostrom's complete method for checking
confluence based on decreasing diagrams, a generalization of
joinability~\cite{DBLP:journals/tcs/Oostrom94,DBLP:conf/rta/Oostrom08}. In
particular, while Huet's result stated that linear systems are
confluent provided their critical pairs are strongly
confluent~\cite{DBLP:journals/jacm/Huet80}, Felgenhauer showed that
right-linearity could be removed provided parallel critical pairs have
decreasing diagrams~\cite{Felgenhauer13}. Knuth-Bendix's and
Felgenhauer's theorems can join forces in presence of both terminating
and non-terminating rules~\cite{DBLP:conf/rta/LiuDJ14}.

We show here that rank non-increasing layered systems are confluent
provided their critical pairs have decreasing diagrams.  Our
confluence result for non-terminating non-linear systems by critical
pair analysis is the first we know of. Further, our result holds in
case critical pairs become infinite, solving a
long standing problem raised in~\cite{DBLP:journals/jacm/Huet80}.
Prior solutions to the problem existed under different assumptions
that could be easily
challenged~\cite{DBLP:conf/ctrs/ToyamaO94,gomi,DBLP:conf/lpar/KleinH12}.

Our results use a simplified version of sub-rewriting introduced
in~\cite{DBLP:conf/rta/LiuDJ14}, and a simple, but essential
revisitation of unification in case overlaps generate occur-check
equations: \emph{cyclic unification} is based on a new,
important notion of cyclic unifiers, which enjoy all good properties
of unifiers over finite trees such as existence of most general cyclic
unifiers, and can therefore represent solutions of occur-check
equations by simple rewriting means.

Terms are introduced in Section~\ref{s:terms}, labelled rewriting and
decreasing diagrams in Section~\ref{s:labelledr}, sub-rewriting in
Section~\ref{s:sub}, cyclic unification in Section~\ref{s:unif} and
layered systems in Section~\ref{s:layer} where our main result is
developed, before concluding in Section~\ref{s:conclusion}.

\section{Terms, substitutions, and rewriting}
\label{s:terms}

Given a \emph{signature} $\cF$ of \emph{function symbols} and a
denumerable set $\cX$ of \emph{variables}, $\TFX$ denotes the set of
finite or infinite rational \emph{terms} built up from $\cF$ and
$\cX$. We reserve letters $x,y,z$ for variables, $f,g,h$ for function
symbols, and $s,t,u,v,w$ for terms.  Terms are recognized by top-down
tree automata in which some $\omega$-states, and only those, are
possibly traversed infinitely many times. Terms are identified with
labelled trees. See~\cite{tata2007} for details.

\emph{Positions} are finite strings of positive integers. We use
$o,p,q$ for arbitrary positions, the empty string $\rootp$ for the
root position, and ``$\cdot$'' for concatenation of positions or sets
thereof. We use $\FPos{t}$ for the (possibly infinite) set of
non-variable positions of $t$, $t(p)$ for the function symbol at
position $p$ in $t$, $t|_p$ for the \emph{subterm} of $t$ at position
$p$, and $t[u]_p$ for the result of replacing $t|_p$ with $u$ at
position $p$ in $t$.  We may omit the position $p$, writing $t[u]$ for
simplicity and calling $t[\cdot]$ a \emph{context}. We use $\geordp$
for the partial prefix order on positions (further from the root is bigger),
$p\# q$ for incomparable positions $p,q$, called \emph{disjoint}.  
The order on positions is extended to finite
sets as follows: $P\geordp Q$ (resp.\ $P\gtordp Q$) if $(\forall p\in
P)(\exists q\in max(Q))\, p\geordp q$ (resp.\ $p\gtordp q$), where
$max(P)$ is the set of maximal positions in $P$.  We use $p$ for the
set $\{p\}$.

We use $\Var{t_1,\ldots,t_n}$ for the set of variables occurring in
$\{t_i\}_i$. A term $t$ is \emph{ground} if $\Var{t}=\emptyset$, 
\emph{linear} if no variable occurs more than once in $t$. Given a
term $t$, we denote by $\lin{t}$ any linear term obtained by renaming,
for each variable $x\!\in\!\Var{t}$, the occurrences of $x$ at positions
$\{p_i\}_i$ in $t$ by \emph{linearized variable} $x^{k_i}$ such that
$i\!\neq\! j$ implies $x^{k_i}\!\neq\! x^{k_j}$. Note that
$\Var{\lin{s}}\cap\Var{\lin{t}}=\emptyset$ iff
$\Var{s}\cap\Var{t}=\emptyset$. Identifying $x^{k_0}$ with
$x$, $\lin{t}=t$ for a linear term $t$.

A \emph{substitution} $\sigma$ is an endomorphism from terms to terms
defined by its value on its \emph{domain} $\Dom{\sigma}:=\{x \,:\,
\sigma(x)\neq x\}$. Its \emph{range} is
$\Ran{\sigma}:=\bigcup_{x\in\Dom{\sigma}} \Var{x\sigma}$. We use
$\sigma_{|V}$ for the restriction of $\sigma$ to
$V\subseteq\Dom{\sigma}$, and $\sigma_{|\lnot X}$ for the restriction
of $\sigma$ to $\Dom{\sigma}\setminus X$. The substitution $\sigma$ is
said to be \emph{finite} (resp., a \emph{variable substitution}) if
for each $x\in\Dom{\sigma}$, $\sigma(x)$ is a finite term (resp., a
variable).  Variable substitutions are called \emph{renamings} when
also bijective.  A substitution $\gamma$ is \emph{ground} if for each
$x\in\cX$, $\gamma(x)$ is ground. We
use Greek letters for substitutions and postfix notation for their
application. 

The strict \emph{subsumption order} $\gtsubs$ on finite terms
(resp.\ substitutions) associated with the quasi-order $s \gesubs t$
(resp.\ $\sigma \gesubs \tau$) iff $s\!=\!t\theta$
(resp.\ $\sigma\!=\!\tau\theta$) for some substitution $\theta$, is
well-founded.

A \emph{rewrite rule} is a pair of finite terms, written
${\displaystyle l\ra r}$, whose \emph{lefthand side} $l$ is not a
variable and whose \emph{righthand side} $r$ satisfies
$\Var{r}\subseteq\Var{l}$. A \emph{rewrite system} $R$ is a set of
rewrite rules. A rewrite system $R$ is \emph{left-linear}
(resp.\ \emph{right-linear}, \emph{linear}) if for every rule $l\ra
r\in R$, $l$ is a linear term (resp.\ $r$ is a linear term, $l$ and
$r$ are linear terms).  Given a rewrite system $R$, a term $u$
\emph{rewrites} to a term $v$ \emph{at} a position $p$, written
${\displaystyle u\dlrps{p}{R} v}$, if $u|_p=l\sigma$ and
$v=u[r\sigma]_p$ for some rule $l\ra r\in R$ and substitution
$\sigma$.  The term $l\sigma$ is a \emph{redex} and $r\sigma$ its
\emph{reduct}.  We may omit $R$ as well as $p$, and also replace the
former by the rule which is used and the latter by a property it
satisfies, writing for example $u \dlrps{\ordp P}{l\ra r} v$.
Rewriting \emph{terminates} if there exists no infinite rewriting
sequence issuing from some term. Rewriting is sometimes called
\emph{plain rewriting}.

Consider a \emph{local peak} made of two rewrites issuing from the
same term $u$, say $u\lrps{p}{l\ra r} v$
and $u \lrps{q}{g\ra d} w$. Following
Huet~\cite{DBLP:journals/jacm/Huet80}, we distinguish three cases:

$p\#q \mbox{ (disjoint case), } q\gtordp p\cdot \FPos{l}
\mbox{ (ancestor case), and }
q\in p\cdot \FPos{l} \mbox{ (critical case).}$

\noindent
Given two, possibly different rules $l\ra r, g\ra d$ and a position
$p\in\FPos{l}$ such that $\Var{l}\cap\Var{g}=\emptyset$ and $\sigma$
is a most general unifier of the equation $l|_p=g$, then $l\sigma$ is
the \emph{overlap} and $\langle r\sigma,\, l\sigma[d\sigma]_p\rangle$
the \emph{critical pair} of $g\ra d$ on $l\ra r$ at $p$.

Rewriting extends naturally to
lists of terms of the same length, hence to substitutions of the same
domain. See~\cite{DBLP:books/el/leeuwen90/DershowitzJ90,terese} for surveys.

\section{Labelled rewriting and decreasing diagrams}
\label{s:labelledr}
Our goal is to reduce confluence of a non-terminating rewrite system
$R$ to that of finitely many critical pairs. Huet's analysis of linear
non-terminating systems was based on Hindley's lemma, stating that a
non-terminating relation is confluent provided its local peaks are
joinable in at most one step from each
side~\cite{DBLP:journals/jacm/Huet80}. The more general analyses needed
here have been made possible by van Oostrom's notion of decreasing
diagrams for labelled relations.

\begin{definition}
 A \emph{labelled rewrite relation} is a pair made of a rewrite
  relation $\lrps{}{}$ and a mapping from rewrite steps
   to a set of labels $\cL$ equipped with a partial
  quasi-order $\geordl$ whose strict part $\ordl$ is well-founded.
  We write ${\displaystyle u\lrps{p,m}{R}v}$ for a rewrite
  step from $u$ to $v$ at position $p$ with label $m$ and rewrite
  system $R$. Indexes  $p,m,R$ may be omitted. 
  We also write $\alpha\ordl l$ (resp. $l\ordl \alpha$) if $m\ordl l$ (resp.
  $l\ordl m$) for all $m$ in a multiset $\alpha$.
\end{definition}

Given an arbitrary labelled rewrite step $\lrps{l}{}$, we
denote its projection on terms by $\lrps{}{}$, its inverse by
$\rlps{l}{}$, its reflexive closure by $\lrpseq{l}$, its symmetric
closure by ${\displaystyle \eqps{l}{}}$, its reflexive, transitive
closure by $\lrder{\alpha}$ for some word $\alpha$ on the alphabet of labels,
and its reflexive, symmetric, transitive closure, called
\emph{conversion}, by ${\displaystyle \dconv{\alpha}}$. We may
consider the word $\alpha$ as a multiset.
\hide{Given $u$, $\{v \,|\,
u\lrder{}{} v\}$ is the set of reducts of $u$. We say that a
reduct of $u$ is \emph{reachable} from $u$.}

The triple $v,u,w$ is said to be a \emph{local peak} if
$v\rlps{l}{}u\lrps{m}{}w$, a \emph{peak} if
$v\rlder{\alpha}u\lrder{\beta}w$, a \emph{joinability diagram} if
$v\lrder{\alpha}u\rlder{\beta}w$. The local peak $v\rlps{p,m}{l\ra
  r}u\lrps{q,n}{g\ra d}w$ is a \emph{disjoint, critical, ancestor}
peak if $p\# q, q\in p\cdot\FPos{l}, q\gtordp p\cdot\FPos{l}$,
respectively. The pair $v,w$ is \emph{convertible} if ${\displaystyle
  v\dconv{\alpha}w}$, \emph{divergent} if
$v\rlder{\alpha}u\lrder{\beta}w$ for some $u$, and \emph{joinable} if
$v\lrder{\alpha}t\rlder{\beta}w$ for some $t$.  The relation
$\lrps{}{}$ is \emph{locally confluent} (resp.\ \emph{confluent},
\emph{Church-Rosser}) if every local peak (resp.\ divergent pair,
convertible pair) is joinable.

Given a labelled rewrite relation $\lrps{l}{}$ on terms, we consider
specific conversions associated with a given local peak
called \emph{local diagrams} and recall the important subclass of van
Oostrom's decreasing diagrams and their main property: a relation all
whose local diagrams are decreasing enjoys the Church-Rosser property,
hence confluence. Decreasing diagrams were introduced
in~\cite{DBLP:journals/tcs/Oostrom94}, where it is shown that they
imply confluence,  and further developed in~\cite{DBLP:conf/rta/Oostrom08}.
The first version suffices for our needs.

\begin{definition}[Local diagrams]
  A \emph{local diagram} $D$ is a pair made
  of a \emph{local peak} $\lp{D}=~ v\,\rlps{}{} u \lrps{}{}w$ and a
  \emph{conversion} $\lc{D}=~ v\dconv{}w$. We call \emph{diagram
    rewriting} the rewrite relation $\Lrps{}{\cD}$ on 
  conversions associated with a set $\cD$ of local diagrams, in which
  a local peak is replaced by one of its associated conversions:
\[P~\lp{D}~Q ~\Lrps{}{\cD}~ P~\lc{D}~Q \mbox{ for some } D\in\cD\]
\end{definition}

\begin{definition}[Decreasing diagrams~\cite{DBLP:journals/tcs/Oostrom94}] 
  A local diagram $D$ with peak $v\rlps{l}{} u \lrps{m}{}w$ is
  \emph{decreasing} if its conversion $\lc{D}=v\dlrder{\alpha}{} s
  \lrpseq{m}{} s' \dlrder{\delta}{} \drlder{\delta'}{}
  t'\rlpseq{l}{}t\drlder{\beta}{}w$ satisfies the following
  \emph{decreasingness condition}: labels in $\alpha$ (resp.\ $\beta$)
  are strictly smaller than $l$ (resp.\ $m$), and labels in
  $\delta,\delta'$ are strictly smaller than $l$ or $m$.  The rewrites
  $s \lrpseq{m}{} s'$ and $t'\rlpseq{l}{}t$ are called the
  \emph{facing steps} of the diagram.
\hide{The rewrites $v\dlrder{\alpha}{} s$ and $t\drlder{\beta}{}w$, $s
  \dlrpseq{m}{} s'$ and $t'\drlpseq{l}{}t$, and $s' \dlrder{\delta}{}
  \drlder{\delta'}{} t'$ are called the \emph{side}, \emph{facing},
  and \emph{middle steps} of the diagram, respectively. A decreasing
  diagram $D$ is \emph{stable} if $C[D\gamma]$ is decreasing for
  arbitrary context $C[\cdot]$ and substitution $\gamma$.}
\end{definition}
\begin{theorem}[\!\!\cite{DBLP:conf/icalp/JouannaudO09}]
\label{t:proof-termination}
The relation $\Lrps{}{\cD}$ terminates for any set $\cD$ of decreasing
diagrams.
\end{theorem}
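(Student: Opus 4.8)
The plan is to define a measure $\mu$ from conversions into a well-founded partial order $(M,\succ)$ and to show that every $\Lrps{}{\cD}$ step strictly decreases $\mu$. Since one application of $\Lrps{}{\cD}$ replaces a single local peak $\lp{D}=v\rlps{l}{}u\lrps{m}{}w$ occurring inside a conversion $P\,\lp{D}\,Q$ by the conversion $\lc{D}$ of some decreasing diagram $D\in\cD$, while leaving the surrounding parts $P$ and $Q$ untouched, it suffices to prove $\mu(P\,\lp{D}\,Q)\succ\mu(P\,\lc{D}\,Q)$ for every such replacement and every context $P,Q$. Non-existence of an infinite $\Lrps{}{\cD}$ derivation is then immediate from well-foundedness of $\succ$.

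Two structural remarks guide the choice of $\mu$. First, reading $\lc{D}=v\lrder{\alpha} s\lrpseq{m} s'\lrder{\delta}\rlder{\delta'} t'\rlpseq{l} t\rlder{\beta} w$ from left to right, all forward steps precede all backward steps, so a decreasing diagram is a \emph{valley} and contains no local peak of its own. Hence replacing $\lp{D}$ by $\lc{D}$ destroys the peak at $u$ and can create new local peaks only at the two junctions where $\lc{D}$ meets $P$, at $v$, and $Q$, at $w$. Second, the decreasingness condition guarantees that every step of $\lc{D}$ other than the two facing steps $s\lrpseq{m}s'$ and $t'\rlpseq{l}t$ carries a label strictly below $l$ or $m$, the facing steps carrying exactly $m$ and $l$; consequently a freshly created junction peak always pairs a pre-existing step of $P$ or $Q$ with a step of $\lc{D}$ that is either a facing step or a strictly smaller side or middle step.

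The obvious candidate measure, the multiset of the labels of all steps compared by the multiset extension of the well-founded label order, is doomed precisely by the two facing steps: they reinstate the labels $l$ and $m$ while the side and middle steps add further, strictly smaller labels, so the bare label multiset can grow. The remedy I would adopt is to weigh a step not by its label alone but by a nested value pairing its label with the multiset of labels that separate it from the peaks it can still feed, and to compare conversions by the multiset extension of the resulting order, whose well-foundedness descends from that of the label order through the standard multiset construction. I expect the heart of the argument, and its main obstacle, to be the verification that $\mu$ strictly decreases, and that difficulty concentrates in exactly the two phenomena that kill the naive measure: the facing steps, which retain $l$ and $m$ and must be shown to sit, after replacement, inside a strictly lighter nested context than the eliminated peak contributed; and the junction peaks, which reuse an arbitrary, possibly large context label $k$ and so force one to argue that $k$'s weight was already present beforehand and is now merely re-paired with a smaller label rather than genuinely added. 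Coping with the partiality of the label order while invoking the multiset extension is the remaining, more routine technical point.
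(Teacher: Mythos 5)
First, a point of order: the paper does not prove this theorem itself --- it is imported from the cited reference (Jouannaud and van Oostrom, ICALP 2009) --- so there is no in-paper proof to compare against. Judged on its own terms, your text is a proof \emph{plan} rather than a proof, and the gap sits exactly where you yourself locate it. The two sound observations --- that $\lc{D}$ is a valley, so that replacing $\lp{D}$ by $\lc{D}$ can create new local peaks only at the junctions with $P$ and $Q$, and that the bare multiset of labels fails as a measure because the facing steps reinstate $l$ and $m$ --- do match the genuine difficulties of the known proof. But the object that must carry the entire argument, the measure $\mu$, is never defined: ``a nested value pairing its label with the multiset of labels that separate it from the peaks it can still feed'' is not a definition (which peaks? on which side of the step? does a forward step collect labels to its left or to its right?), and without a definition neither the well-foundedness of $\succ$ nor the strict decrease $\mu(P\,\lp{D}\,Q)\succ\mu(P\,\lc{D}\,Q)$ can be checked. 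You explicitly defer the verification of the decrease, which is the whole content of the theorem.

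There is also a structural flaw in the reduction you state at the outset. Once the weight of a step depends on its context in the conversion --- which any measure of the kind you propose must do --- the fact that $P$ and $Q$ are ``left untouched'' does \emph{not} imply that their contribution to $\mu$ is unchanged: the steps of $P$ and $Q$ keep their labels but may change their weights when the peak separating them is replaced by $\lc{D}$. One must therefore prove, in addition, that no step of $P$ or $Q$ gains weight, and this is precisely the delicate point of the actual argument (van Oostrom's 1994 measure, reused in the cited paper), where each step is measured by its label filtered by the labels lying between it and the appropriate end of the conversion, and the decreasingness constraints on $\alpha,\beta,\delta,\delta'$ and on the two facing steps are invoked both to bound the new steps below the removed peak and to show that the old steps do not increase. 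Your sketch names this obstacle in the guise of the ``junction peaks'' but does not resolve it; as it stands the proposal is an accurate map of where the proof is hard, not a proof.
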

\begin{corollary}
\label{c:confluence}
Assume that $T\subseteq \TFX$ and $\cD$ is a set of decreasing
diagrams in $T$ such that the set of $T$-conversions is closed under
$\Lrps{}{\cD}$.  Then, the restriction of $\lrps{}{}$ to $T$ is
Church-Rosser if every local peak in $T$ has a decreasing diagram in
$\cD$.
\end{corollary}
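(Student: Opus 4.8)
**

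The plan is to derive Corollary~\ref{c:confluence} directly from the termination of diagram rewriting established in Theorem~\ref{t:proof-termination}, by showing that any $T$-conversion between two terms can be rewritten, under $\Lrps{}{\cD}$, into a joinability diagram. Recall that Church-Rosser means every convertible pair $v \dconv{\alpha} w$ is joinable. So first I would fix an arbitrary convertible pair $v,w$ in $T$, witnessed by a conversion $C_0 =~ v \dconv{\alpha} w$, which by hypothesis lies entirely in $T$ since the set of $T$-conversions is closed under $\Lrps{}{\cD}$ and $C_0$ is already a $T$-conversion.

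Next I would consider the rewrite relation $\Lrps{}{\cD}$ acting on $C_0$. Every local peak $P~\lp{D}~Q$ appearing inside the conversion can be replaced by the conversion $\lc{D}$ of some local diagram $D \in \cD$; the assumption that every local peak in $T$ has a decreasing diagram in $\cD$ guarantees that whenever a local peak occurs in a $T$-conversion, there is an applicable rewrite step. By Theorem~\ref{t:proof-termination}, $\Lrps{}{\cD}$ terminates, so starting from $C_0$ and repeatedly eliminating local peaks, I reach a $\Lrps{}{\cD}$-normal form $C_\infty$. Closure under $\Lrps{}{\cD}$ keeps each intermediate conversion inside $T$, and since projection on terms is preserved by each diagram replacement, $C_\infty$ is still a conversion between $v$ and $w$.

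The crux is then to argue that a $\Lrps{}{\cD}$-normal conversion contains no local peak at all, i.e.\ it is \emph{peak-free}. This is where the normal-form property is used: if $C_\infty$ contained a subpath of the form $\rlps{}{} \cdot \lrps{}{}$, that would be a local peak in $T$, which by hypothesis admits a decreasing diagram in $\cD$, so a further $\Lrps{}{\cD}$ step would apply, contradicting normality. Hence $C_\infty$ has the shape $v \lrder{} t \rlder{} w$ for some $t$ — a sequence of forward steps followed by a sequence of backward steps — which is exactly a joinability diagram witnessing that $v$ and $w$ are joinable. I expect the main obstacle to be the careful bookkeeping that every conversion reached from $C_0$ remains a $T$-conversion and that eliminating one local peak does not spuriously block the applicability of diagram rewriting elsewhere; this is precisely what the closure hypothesis on $T$-conversions is designed to handle, and it is why that hypothesis, rather than plain confluence of $\lrps{}{}$ on $T$, appears in the statement.
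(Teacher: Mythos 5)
Your proposal is correct and follows exactly the intended argument: the paper gives no explicit proof, describing the corollary as a ``simple corollary'' of Theorem~\ref{t:proof-termination}, and the standard derivation is precisely what you wrote --- normalize a $T$-conversion under $\Lrps{}{\cD}$ (termination by Theorem~\ref{t:proof-termination}, staying in $T$ by the closure hypothesis), and observe that a normal form must be peak-free, hence a joinability diagram. No gaps.
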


This simple corollary of Theorem~\ref{t:proof-termination} is a
reformulation of van Oostrom's decreasing diagram theorem which is
convenient for our purpose.

\section{Sub-rewriting}
\label{s:sub}
Consider the following famous system inspired by an abstract example
of Newman, algebraized by Klop and publicized by
Huet~\cite{DBLP:journals/jacm/Huet80}, $\mbox{NKH} = \{f(x,x) \ra
a,\; f(x,c(x))\ra b,\; g\ra c(g)\}$.  NKH is overlap-free, hence
locally confluent by Huet's
lemma~\cite{DBLP:journals/jacm/Huet80}. However, it enjoys
non-joinable non-local peaks such as $a \drlps{}{} f(g,g) \dlrps{}{}
f(g, c(g)) \dlrps{}{} b$.

The main difficulty with NKH is that non-joinable peaks are
non-local. To restore the usual situation for which the confluence of
a relation can be characterized by the joinability of its local peaks,
we need another rewrite relation whose local peaks capture the
non-confluence of NKH as well as the confluence of its confluent
variations.  A major insight of~\cite{DBLP:conf/rta/LiuDJ14} is that
this can be achieved by the \emph{sub-rewriting} relation, that allows
us to rewrite $f(g,c(g))$ in one step to either $a$ or $b$, therefore
exhibiting the pair $\langle a,b\rangle$ as a sub-rewriting critical
pair.  Sub-rewriting is made of a preparatory \emph{equalization}
phase in which the variable instances of the lefthand side $l$ of some
rule $l\ra r$ are joined, taking place before the rule is applied in
the \emph{firing phase}.  In~\cite{DBLP:conf/rta/LiuDJ14},
sub-rewriting required a signature split to define layers in terms,
the preparatory phase taking place in the lower layers. No a-priori
layering is needed here:

\begin{definition}[Sub-rewriting]
\label{d:modulo}
A term $u$ \emph{sub-rewrites} to a term $v$ at $p\in \Pos{u}$ for
some rule $l\ra r\!\in\! R$, written $u \lrps{p}{\Rsub} v$, if 
$u \,\lrdertwo{(\gtordp p\cdot\FPos{l})}{\R} \, u[l\theta]_p \lrps{p}{R}
u[r\theta]_p=v$ for some substitution $\theta$. 
The term $u|_p$ is called a \emph{sub-rewriting redex}.
\end{definition}

This definition of sub-rewriting allows \emph{arbitrary rewriting}
below the lefthand side of the rule until a redex is obtained. This
is the major idea of sub-rewriting, ensuring that $R\subseteq \Rsub
\subseteq R^*$. A simple, important property of a sub-rewriting redex
is that it is an instance of a linearized lefthand side of rule:

\begin{lemma}[Sub-rewriting redex]
\label{l:linredex}
Assume $u$ sub-rewrites to $u[r\sigma]_p$ with $l\ra r$ at position $p$.\\
Then $u|_p=\lin{l}\theta$ for some $\theta$ s.t. $(\forall
x\in\Var{l})\,(\forall p_i\in\Pos{l}\,\mbox{s.t.}\, l(p_i)\!=\!x)\,
\theta(x^{p_i}) \lrdertwo{}{R} \sigma(x)$.\\ We say that $\sigma$ is
an \emph{equalizer} of $l$, and the rewrite steps from $\lin{l}\theta$
to $l\sigma$ are an \emph{equalization}.
\end{lemma}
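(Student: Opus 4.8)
The plan is to unfold the definition of sub-rewriting and to observe that its equalization phase can only rewrite strictly inside the variable instances of $l$, so that the function skeleton of $l$ is left untouched and $u|_p$ is forced to be an instance of the linearized pattern $\lin{l}$. By Definition~\ref{d:modulo} applied to the step $u\lrps{p}{\Rsub} u[r\sigma]_p$, there is a substitution $\sigma$, the equalizer, with $u\lrdertwo{(\gtordp p\cdot\FPos{l})}{R} u[l\sigma]_p\lrps{p}{R} u[r\sigma]_p$. I would then isolate the equalization prefix $u\lrdertwo{(\gtordp p\cdot\FPos{l})}{R} u[l\sigma]_p$ and analyse which positions it can rewrite.

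The key geometric step is the following: every position $o$ with $o\gtordp p\cdot\FPos{l}$ is, by the set-order definition and since prefixing commutes with $max$, a strict descendant of some position $p\cdot q$ with $q\in max(\FPos{l})$; because a maximal non-variable position of $l$ has only variable positions among its children (otherwise it would not be maximal), $o$ lies strictly below a variable position of $l$, hence strictly inside a variable instance of the redex at $p$. Consequently no rewrite step of the equalization touches a position in $p\cdot\FPos{l}$, so $u$ and $u[l\sigma]_p$ — and therefore $u|_p$, $l\sigma$, and $l$ — all carry the same function symbol at every position of $\FPos{l}$. Setting $\theta(x^{p_i}):=u|_{p\cdot p_i}$ for each variable position $p_i$ of $l$ with $l(p_i)=x$ then gives $u|_p=\lin{l}\theta$ by comparing skeletons and variable slots. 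Finally, since the rewrite steps act at the pairwise disjoint variable positions $\{p\cdot p_i\}_i$, they work independently on each instance, and the net effect at $p\cdot p_i$ is the corresponding subterm $(l\sigma)|_{p_i}=\sigma(x)$ of the target; hence $\theta(x^{p_i})=u|_{p\cdot p_i}\lrdertwo{}{R}\sigma(x)$, which is exactly the claimed property (a variable position that is never rewritten is subsumed, with a zero-length derivation, since it already equals $\sigma(x)$).

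The main obstacle is precisely this geometric claim: that the allowed equalization positions lie strictly below variable positions and hence preserve the skeleton of $l$; once it is established, the rest is bookkeeping about contexts and disjoint positions. One further point deserving care, and the reason $\lin{l}$ rather than $l$ appears in the statement, is that two occurrences of the same variable $x$ of $l$ may be equalized to the common value $\sigma(x)$ starting from genuinely different preimages $\theta(x^{p_i})$; linearizing the pattern is what lets us record these preimages separately and state the per-occurrence derivations $\theta(x^{p_i})\lrdertwo{}{R}\sigma(x)$.
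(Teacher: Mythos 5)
Your proof is correct and follows the same route the paper intends: the paper's own justification for this lemma is simply ``from the definitions of sub-rewriting and linearization'', and your argument fills in exactly those details --- the restriction of equalization steps to positions strictly below $max(p\cdot\FPos{l})$ forces them strictly inside variable instances, preserving the function-symbol skeleton of $l$ at $p$, after which $\theta$ is read off position-wise and the per-occurrence derivations $\theta(x^{p_i})\lrdertwo{}{R}\sigma(x)$ follow by disjointness of the variable positions. The only point you elide is the degenerate case of a constant at a maximal position of $\FPos{l}$ (which has no children), but that case is vacuous since no equalization step can then occur below it.
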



\hide{
\begin{proof}
From the definitions of sub-rewriting and linearization.
\end{proof}
}

Sub-rewriting differs from rewriting modulo by being directional. It
differs from Klop's higher-order rewriting modulo
developments~\cite{terese} used by Okui for first-order
computations~\cite{DBLP:conf/rta/Okui98}, in that the preparatory
phase uses arbitrary rewriting. Having non-left-linear rules with
critical pairs at subterms seems incompatible with using developments.
Sub-rewriting differs as well from relative
rewriting~\cite{DBLP:journals/jar/HirokawaM11} in that the preparatory
phase must take place below variables. The latter condition is
essential to obtain plain critical pairs based on plain unification.

Assuming that local sub-rewriting peaks characterize the confluence of
NKH, we need to compute the corresponding critical pairs. Unifying
the lefthand sides $f(x,x)$ and $f(y,c(y))$ results in the conjunction
$x=y\land y=c(y)$ containing the \emph{occur-check equation} $y=c(y)$,
which prevents unification from succeeding on finite trees but allows it to
succeed on infinite rational trees: the critical peak has therefore an
infinite overlap $f(c^\omega, c^\omega)$ and a finite critical pair
$\langle a, b\rangle$. At the level of infinite trees, we then have an
infinite local rewriting peak $a\drlps{}{} f(c^\omega, c^\omega) =
f(c^\omega, c(c^\omega)) \dlrps{}{} b$, the properties of infinite
trees making the sub-rewriting preparatory phase
useless. Sub-rewriting therefore captures on finite trees some
properties of rewriting on infinite trees, here the existence of a
local peak. Computing the critical pairs of the sub-rewriting
relation is therefore related to unification over finite trees
resulting possibly in solutions over infinite rational trees. In the
next section, we develop a novel view of unification that will allow
us to capture both finite and infinite overlaps by finite means.

\section{Cyclic unification}
\label{s:unif}
This section is adapted
from~\cite{DBLP:conf/fgcs/Colmerauer84,DBLP:books/el/leeuwen90/DershowitzJ90,DBLP:conf/birthday/JouannaudK91}
by treating finite and infinite unifiers uniformly: equality of terms
is interpreted over the set of infinite rational terms \emph{when
  needed}.

\hide{
We will also use the novel notion of
\emph{cyclic unifiers} for encoding infinite ones via the congruence
closure generated by the occur-check equations.
}

An \emph{equation} is an oriented pair of finite terms, written $u=v$.
A \emph{unification problem} $P$ is a (finite) conjunction $\land_i\,
u_i=v_i$ of equations, sometimes seen as a multiset of pairs written
$\vect{u}=\vect{v}$. A \emph{unifier} (resp. a \emph{solution}) of $P$
is a substitution (resp., a ground substitution) $\theta$ such that
$(\forall i)\, u_i\theta=v_i\theta$. 
A unifier describes a generally infinite set of solutions via its
ground instances.  A major usual assumption, ensuring that solutions
exist when unifiers do, is that the set $\cT(\cF)$ of ground terms is
non-empty.  A unification problem $P$ has a \emph{most general finite
  unifier} $mgu(P)$, whenever a finite solution exists, which is
minimal with respect to subsumption and unique up to variable
renaming.  Computing $mgu(P)$ can be done by the unifier-preserving
transformations of Figure~\ref{fig:unif}, starting with $P$ until a
\emph{solved form} is obtained, $\perp$ denoting the absence of
solution, whether finite or infinite. Our notion of solved form
therefore allows for infinite unifiers (and solutions) as well as
finite ones:

\begin{definition}
\label{d:sf}
\emph{Solved forms} of a unification problem $P$ different from $\perp$
are unification problems
$S= \vect{x}=\vect{u}\land\vect{y}=\vect{v}$
such that

(i) $\cP=\Var{P}\setminus (\vect{x}\cup\vect{y})$ is the
set of \emph{parameters} of $S$;

(ii) variables in $\vect{x}\cup\vect{y}$ (i.e. variables at lefthand
sides of equations) are all distinct;

(iii) $(\forall\, x=u\!\in\!\vect{x}=\vect{u})$,
$\Var{u}\subseteq\cP$;

(iv) $(\forall\, y=v\!\in\!\vect{y}=\vect{v})$, $\Var{v}\subseteq
\cP\cup\vect{y}$, $\Var{v}\cap\vect{y}\neq\emptyset$ and
$v\not\in\cX$.
\end{definition}

Equations $y=v\in\vect{y}=\vect{v}$ are called \emph{cyclic} (or
\emph{occur-check}, the vocabulary originating
from~\cite{DBLP:conf/fgcs/Colmerauer84} used so far), $\vect{x}$ is
the set of \emph{finite} variables, and $\vect{y}$ is the set of
(infinite) \emph{cyclic} (or \emph{occur-check}) variables. A solved
form is a set of equations since $\vect{x}\cup\vect{y}$ is itself a
set and an equation $x=y$ between variables can only relate a finite
variable $x$ with a parameter $y$.

\newcommand{\unifrule}[4]{
\mbox{\bf #1} & #2 
& \!\!\!\longrightarrow & \!\!
#3 &\!\!\! #4\\
}
\newcommand{\Remove}{{\bf Remove}\xspace}
\newcommand{\Decompose}{{\bf Decomp}\xspace}
\newcommand{\Conflict}{{\bf Conflict}\xspace}
\newcommand{\Choose}{{\bf Choose}\xspace}
\newcommand{\Coalesce}{{\bf Coalesce}\xspace}
\newcommand{\Swap}{{\bf Swap}\xspace}
\newcommand{\Merge}{{\bf Merge}\xspace}
\newcommand{\Replace}{{\bf Replace}\xspace}
\newcommand{\Merep}{{\bf Merep}\xspace}
\newcommand{\Eliminate}{{\bf Eliminate}\xspace}

\begin{figure}[h]
\hrule
\[\begin{array}{llcll}
\unifrule{Remove}{s=s \land P}{P}{ }

\unifrule{Decomp}{f(\vect{s}) = f(\vect{t}) \land P}{\vect{s}=\vect{t}
  \land P }{}

\unifrule{Conflict}{f(\vect{s}) = g(\vect{t})\land P}{\perp}{\mbox{\bf if } f\neq g}

\unifrule{Choose}{y = x \land P}{x = y \land P}{\mbox{\bf if } x\not\in Var(P),\,y\in\Var{P}}

\unifrule{Coalesce}{x = y \land P}{x = y \land P\{x \mapsto y\}}
   {\mbox{\bf if } x,y\in Var(P),\,x\neq y}

\unifrule{Swap}{u = x \land P}{x = u \land P}{\mbox{\bf if } u\not\in \cX}

\unifrule{Merge}{x = s\land x=t \land P}{x=s\land s = t \land
  P}{\mbox{\bf if } x\in \cX,\, 0 < |s| \leq |t|}

\unifrule{Replace}{ x =s \land P}{x = s\land P\{x\mapsto s\}}
{\mbox{\bf if } x\in Var(P),\, x\not\in\Var{s},\, s\not\in\cX}

\unifrule{Merep}{y\!=\!x\land x\!=\!s\land P}{y=s\land x = s\land P}
{\mbox{\bf if } x\!\in\! Var(s),\, s\!\not\in\! \cX, y\!\not\in\!\Var{s,P}}
&&&&\hfill{\mbox{ and no other rule applies}}

\end{array}
\]
\vspace{-6mm}
\caption{Unification Rules}
\label{fig:unif}
\hrule
\end{figure}

\begin{example}[NKH] 
$f(x,x)=f(y,c(y)) \ra_{\Decompose} x=y\land x=c(y) \ra_{\Coalesce}
  x=y\land y=c(y) \ra_{\Merep} x=c(y)\land y=c(y)$. Alternatively, 
$f(x,x)=f(y,c(y)) \ra_{\Decompose} x=y\land x=c(y)
  \ra_{\Replace} c(y)=y\land x=c(y) \ra_{\Swap} y=c(y)\land x=c(y)$.
\end{example} 

\Choose and \Swap originate
from~\cite{DBLP:conf/fgcs/Colmerauer84}. \Replace and \Coalesce ensure
that finite variables (but parameters) do not occur in equations
constraining the infinite ones. \Merep is a sort of combination of
\Merge and \Replace ensuring condition $v\not\in\cX$ in
Definition~\ref{d:sf}, item (iv). Unification over finite trees has
another failure rule, called {\bf Occur-check}, fired in presence of
cyclic equations.

\begin{theorem}
Given an input unification problem $P$, the unification
rules terminate, fail if the input has no solution, and return a
solved form $S=\vect{x}=\vect{u}\land\vect{y}=\vect{v}$ otherwise.
\end{theorem}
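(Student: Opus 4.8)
The plan is to split the statement into three independent claims and then recombine them: \textbf{(S)} every rule is \emph{solution-preserving}, with $\perp$ standing for the empty solution set; \textbf{(T)} the rule system terminates; and \textbf{(N)} every normal form different from $\perp$ is a solved form in the sense of Definition~\ref{d:sf}. Since equality is interpreted over infinite rational trees \emph{when needed}, all three claims must be read with that interpretation in mind.

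For \textbf{(S)} I would check each rule in isolation. \Remove, \Decompose and \Conflict are the syntactic core; \Choose and \Swap merely reorient an equation; and \Coalesce, \Replace, \Merge, \Merep are instances of replacing equals by equals together with transitivity, all valid over rational trees. The only points genuinely specific to the rational-tree reading are \Conflict, still sound because two rational trees with distinct root symbols cannot be equal, and the \emph{deliberate absence} of an occur-check failure rule: a cyclic equation $y=v$ with $y\in\Var{v}$ is unsatisfiable over finite trees but has the unique fixpoint solution over rational trees, so it must be kept rather than rejected. Recombining: by \textbf{(T)} every derivation from $P$ ends in a normal form; by \textbf{(N)} that form is either $\perp$ or a solved form $S$; and a solved form always admits a solution, obtained by instantiating its parameters by any ground term (which exists as $\cT(\cF)\neq\emptyset$) and reading the cyclic equations, which are guarded because their right-hand sides are non-variable (clause (iv)), as their unique rational fixpoints. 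Hence if $P$ is unsatisfiable the normal form cannot be a solved form and must be $\perp$, whereas if $P$ is satisfiable \textbf{(S)} forbids $\perp$ and the normal form is a solved form whose solutions coincide with those of $P$.

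For \textbf{(N)} I would argue contrapositively, exhibiting for each violated clause of Definition~\ref{d:sf} a rule that still applies. If some equation has a non-variable left-hand side, then according to its right-hand side one of \Swap, \Decompose, \Conflict (or \Remove on $s=s$) fires, so in a normal form every equation reads $x=w$ with $x\in\cX$. If a variable heads two equations, \Merge fires (orient so that $0<|s|\leq|t|$), giving clause (ii). If a left-hand variable $x$ with $x\notin\Var{w}$ in its own equation still occurs elsewhere, then \Replace or \Coalesce fires; iterating, the non-cyclic (\emph{finite}) left-hand variables $\vect{x}$ occur only as left-hand sides and their right-hand sides contain parameters only, which is clause (iii). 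The remaining left-hand variables are the cyclic ones $\vect{y}$, and clause (iv) asks that their right-hand sides be non-variable: this is exactly what the guard of \Merep enforces, since in a normal form its structural precondition $y=x\land x=s$ with $x\in\Var{s}$, $s\notin\cX$, $y\notin\Var{s,P}$ must fail (all other rules being inapplicable), ruling out a cyclic variable bound to a bare variable.

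The main obstacle is \textbf{(T)}. I would look for a lexicographic measure whose first component is the number of \emph{unsolved} variables, a variable being solved when it occurs exactly once, as the left-hand side of its equation. This component strictly decreases on \Choose, \Coalesce and \Replace, the variable-eliminating rules, which is what pays for the unbounded growth in term size they may cause. The difficulty is that \Decompose, \Merge and \Merep neither eliminate a variable nor uniformly shrink the problem: \Decompose can create duplicate left-hand sides, \Merge can enlarge the total size (which is why its side condition keeps the \emph{smaller} side, $|s|\leq|t|$), and \Merep reshapes a cyclic block without touching any variable. Designing secondary and tertiary components that decrease on exactly these rules while being left untouched, or dominated, by the variable-eliminating ones is the delicate part; the cyclic equations make it harder still, since \Replace is blocked on them and they must be treated as \emph{frozen}. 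The guard ``no other rule applies'' on \Merep is precisely what lets it sit at the very bottom of this ordering.
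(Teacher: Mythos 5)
Your overall decomposition (solution preservation, termination, normal forms are solved forms) is exactly the paper's, and your parts \textbf{(S)} and \textbf{(N)} follow essentially the same route: soundness is checked rule by rule using congruence/substitutivity, and the solved-form characterization is obtained by showing that any violation of Definition~\ref{d:sf} triggers a rule. (Your treatment of clause (iv) is looser than the paper's: the case where a cyclic variable is bound to a bare variable requires the explicit chain $y=z\land z=w\land P''$ with a case analysis showing $z\in\Var{w}$, $w\notin\cX$ and $y\notin\Var{w,P''}$ before \Merep fires, and the partition into $\vect{x}$ versus $\vect{y}$ is by maximality of the set of equations whose right-hand sides contain only parameters, not by whether a variable occurs in its own right-hand side. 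But these are elaborations, not new ideas.)

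The genuine gap is termination, which you yourself flag as ``the main obstacle'' and then leave unresolved: you give only the first component of the measure (the number of unsolved variables, which is the paper's $nu$) and explicitly defer the design of the remaining components. The paper completes this with the lexicographic quadruple $\langle nu, |P|, nvre, nvle\rangle$, where $|P|$ is the \emph{multiset} $\{max(|s|,|t|) : s=t\in P\}$ and $nvre$ (resp.\ $nvle$) counts equations with a variable on the right (resp.\ left) and a non-variable on the other side. Your worries about \Decompose and \Merge dissolve under this measure: \Decompose strictly decreases the multiset $|P|$ (each new equation $s_i=t_i$ has strictly smaller max-size than $f(\vect s)=f(\vect t)$) without unsolving any variable, and \Merge leaves $|P|$ unchanged (the element contributed by $x=t$ is $|t|$, and $max(|s|,|t|)=|t|$ since $|s|\leq|t|$) while strictly decreasing $nvle$. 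The one rule this quadruple does not handle is \Merep, and the paper deals with it by a separate argument rather than by placing it ``at the bottom of the ordering'': when \Merep applies no other rule does, after a \Merep step only another \Merep can apply, and the number of consecutive \Merep steps is bounded by the number of variable--variable equations. Without these components and the \Merep argument, the termination claim --- and hence the whole theorem, since \textbf{(S)} and \textbf{(N)} are only recombined \emph{after} termination is secured --- is not established.
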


\begin{proof}
Termination, characterization of solved forms, soundness,
are all adapted from~\cite{DBLP:conf/birthday/JouannaudK91}.

{\bf Termination.}
The quadruple $\langle nu, |P|, nvre, nvle \rangle$ is used to
interpret a unification problem $P$, where

- $nu$ is the number of unsolved variables (0 for $\perp$), where a
variable $x$ is \emph{solved} in $x=s\land P'$ if $x\not\in\Var{s,P'}$;

\hide{
Note: need to replace by ``solvability'', that is distance to solved
form in terms of number of times the variable x left in an equation x=t.
}

- $|P|$ is the multiset ($\emptyset$ for $\perp$) of natural numbers
$\{max(|s|, |t|) : s=t\in P\}$\,;

- $nvle$ (resp. $nvre$) is the number of equations in $P$ whose
lefthand (resp., righthand) side is a variable and the other side is
not.

\Remove, \Decompose and \Conflict decrease $|P|$ without increasing
$nu$.  \Choose and \Coalesce both decrease $nu$. \Swap decreases
$nvre$ without increasing $nu$ and $|P|$.  \Merge decreases $nvle$
without increasing $nu$, $|P|$ and $nvre$. \Replace decreases $nu$.
Now, when \Merep applies, no other rule can apply, and we can check
that no rules can apply either after \Merep (except another possible
application of \Merep). This can happen only finitely many times, by
simply reasoning on the number of equations whose both sides are
variables.

{\bf Solved form.}  We show by contradiction that the output $P$,
which is in normal form with respect to the unification rules, is a
solved form in case \Conflict never applies. First, $P$ must be a
conjunction of equations $x=s$, since otherwise \Decompose or \Swap
would apply. Let $\cP= \Var{P} \setminus (\vect{x}\cup\vect{y})$.
\vspace{-2mm}
\begin{itemize}
\item
Condition (i) is a definition.

\item
Condition (ii).  
Let $P= x=s\land x=t\land P'$. Either $s$ or $t$ is a variable, since
otherwise \Merge would apply.  Assume without loss of generality that
$s\in\cX$, call it $y$. If $x=y$, \Remove applies.  If
$y\not\in\Var{t,P'}$, then \Choose applies. Otherwise, \Coalesce
applies. Hence $\vect{x}, \vect{y}$ are all different \emph{sets}, and
$P$ is therefore itself a set.

Let now $\vect{x}=\vect{u}$ be a maximal (with respect to inclusion)
set of equations in $P$ such that $\Var{\vect{u}}\subseteq \cP$, and
$\vect{y}=\vect{v}$ be the remaining set of equations.

\item
Condition (iii). It is ensured by the definition of $\vect{x}=\vect{u}$.

\item
Condition (iv). Let $y=v\in\vect{y}=\vect{v}$.

Let now $x=u\in\vect{x}=\vect{u}$, hence $x\not\in\Var{u}$.  Assume
$x\in\Var{v}$.  If $u\not\in\cX$, then \Replace applies.  Otherwise,
if $u$ has no other occurrence in $P$, then \Choose applies, else
\Coalesce applies.  Therefore $\Var{v}\cap\vect{x}=\emptyset$ by
contradiction.

Assume $\Var{v}\cap\vect{y}=\emptyset$. Then $\Var{v}\subseteq\cP$,
which contradicts the maximality of $\vect{x}=\vect{u}$.

We are left to show that $v$ is not a variable. If it were, then
$v\in\vect{y}$.  First, $v\neq y$, otherwise \Remove applies. Let $P
= (y=v)\land P'$ with $v\in\vect{y}\setminus\{y\}$. Let $v=z$, there
must exist $(z=w)\in P'$ for some $w$, otherwise $z\in\cP$. Hence
$P' = (z=w)\land P''$.  Now, $y\not\in\Var{w,P''}$, otherwise
\Coalesce applies. Then we show $z\in\Var{w}$:
firstly, $w\not = z$, otherwise \Remove applies;
secondly, $w$ is not a variable, otherwise $w\not\in\Var{y,P''}$ lets \Choose
apply, while $w\in\Var{y,P''}$ makes \Coalesce available;
then if $z\not\in\Var{w}$, \Replace applies. Thus $z\in\Var{w}$, allowing 
\Merep, which contradicts that we have indeed a solved form.
\end{itemize}

{\bf Soundness.} The set of solutions is an invariant of the
unification rules. This is trivial for all rules but \Coalesce,
\Merge, \Replace, \Merep, for which it follows from the fact that
substitutions are homomorphisms and equality is a congruence.
\end{proof}

The solved form is a \emph{tree solved form} if $\vect{y}=\emptyset$,
and otherwise an \emph{$\Omega$ solved form} whose solutions are
infinite substitutions taking their values in the set of infinite
(rational) terms. 
We shall now develop our notion of \emph{cyclic unifier} capturing
both solved forms
by describing the infinite unifiers of a problem $P$ as a pair of a
finite unifier $\sigma$ and a set of cyclic equations $E$ constraining
those variables that require infinite solutions. In case
$E=\emptyset$, then $P$ is a tree solved form and $\sigma=mgu(P)$. To
avoid manipulating infinite unifiers when $E\neq\emptyset$, we shall
work with the cyclic equations themselves considered as a ground
rewrite system.

\newcommand{\cc}[1]{=^{cc}_{#1}}
\newcommand{\ccc}[1]{\equiv^{cc}_{#1}}
\newcommand{\constant}{\diamond}

\begin{definition}[\!\!\cite{DBLP:journals/jacm/NelsonO80}]
\label{d:cc}
Given a set of equations $E$, we denote by $\cc{E}$ the equational
theory in which the variables in $\Var{E}$ are treated as constants,
also called \emph{congruence closure} $E$.
\end{definition}

We are interested in the congruence closure defined by cyclic
equations, seen here as a set $R$ of ground rewrite rules.  We
may sometimes consider $R$ as a set of equations, to be either solved
or used as axioms, depending on context.

\begin{definition}
\label{d:crs}
A \emph{cyclic rewrite system} is a set of rules
$R=\{\vect{y}\ra\vect{v}\}$ such that the unification problem
$\vect{y}=\vect{v}$ is its own solved form with $\vect{y}$ as the set
of infinite cyclic variables. Variables in $R$ are treated as constants.
\end{definition}

\begin{lemma}
\label{l:crs}
A cyclic rewrite system $R$ is ground and critical pair free, hence
Church-Rosser.
\end{lemma}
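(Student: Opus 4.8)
The plan is to verify the three assertions of the statement in order --- groundness, critical-pair freeness, and Church-Rosser --- the last of which will follow from the first two by orthogonality.

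First I would dispose of groundness. By Definition~\ref{d:crs} every symbol of $\vect{y}$ occurring in $R$ is treated as a constant, so under this reading the rules $\vect{y}\ra\vect{v}$ contain no genuine variables at all; hence $R$ is a set of ground rules, and in particular it is trivially left-linear, since no variable --- let alone a repeated one --- occurs in any lefthand side.

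The core of the argument is critical-pair freeness, and here I would exploit the shape forced on $R$ by the requirement that $\vect{y}=\vect{v}$ be its own solved form with $\vect{y}$ as the cyclic variables and \emph{no} finite variables. Each lefthand side is then a single member $y_i$ of $\vect{y}$, i.e.\ a nullary constant, so $\FPos{y_i}=\{\rootp\}$ and the only candidate overlap position of any two rules is the root. Unifying two lefthand sides $y_i$ and $y_j$ at the root succeeds iff $y_i=y_j$ as constants; but condition~(ii) of Definition~\ref{d:sf} guarantees that the members of $\vect{y}$ are pairwise distinct, which forces $i=j$. Thus the only successful overlap is that of a rule with itself at the root, yielding the trivial critical pair $\langle v_i,v_i\rangle$, excluded by convention. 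Hence no proper critical pair exists and $R$ is critical-pair free.

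Finally, for the Church-Rosser property: being left-linear and free of critical pairs, $R$ is orthogonal, and orthogonal systems are confluent regardless of termination, so I would appeal to Rosen's theorem~\cite{DBLP:journals/jacm/Rosen73}. Equivalently one can argue directly and establish the diamond property: in any term the redexes are exactly the occurrences of the distinct nullary constants $y_i$, and since each rule has a unique lefthand side and each redex occupies a single leaf position, two distinct redex occurrences must sit at disjoint positions; reducing one therefore leaves the other intact, so any local peak closes in a single step on each side. The point to be careful about is precisely that $R$ is typically \emph{non-terminating} --- each $v_i$ contains some constant of $\vect{y}$, so reduction creates fresh redexes --- which blocks the Knuth--Bendix--Huet route and makes the orthogonality (or diamond) argument essential; checking that this redex creation never produces an overlap is the one place where the single-leaf shape of the lefthand sides is genuinely used.
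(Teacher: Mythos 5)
Your proof is correct and follows essentially the same route as the paper's own (one-line) argument: both hinge on the observation that the lefthand sides are pairwise distinct ground constants, so no overlaps are possible and local peaks are disjoint, whence confluence follows without any termination assumption. The paper phrases the last step as ``strongly confluent, hence Church--Rosser'' citing Huet, while you invoke orthogonality/Rosen and also give the direct diamond argument, but these are the same underlying reasoning.
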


\hide{
\begin{proof}
Since a cyclic rewrite system is ground, its lefthand sides are
pairwise non-unifiable, hence $R$ is strongly confluent, hence
Church-Rosser since ground~\cite{DBLP:journals/jacm/Huet80}.
\end{proof}
}

We now introduce our definition of cyclic unifiers and solutions:

\begin{definition}
\label{d:cu}
A \emph{cyclic unifier} of a unification problem $P$ is a pair
$\langle\eta, R\rangle$ made of a substitution $\eta$ and a cyclic
rewrite system $R=\{\vect{y}\ra\vect{v}\}$, satisfying:

(i) $\Dom{\eta}\subseteq\Var{P}\setminus\vect{y}$,
$\Ran{\eta}\cap\vect{y} = \emptyset$, and
$\Ran{\eta}\cap\Dom{\eta}=\emptyset$\,;

(ii) $P$ and $P\land R$ have identical sets of solutions\,; and

(iii.a) $(\forall u=v\in P)\,u\eta\cc{R\eta}v\eta$, \,
or equivalently by Lemma~\ref{l:crs},

(iii.b) $(\forall u=v\in P)\, u\eta \lrdertwo{}{R\eta}\rldertwo{}{R\eta} v\eta$ .

\noindent
A \emph{cyclic solution} of $P$ is a pair $\langle\eta\rho,
R\rangle$ made of a cyclic unifier $\langle\eta, R\rangle$ of $P$
and an additional substitution $\rho$.
\end{definition}

We shall use (iii.a) or (iii.b) indifferently, depending on our needs,
by referring to (iii).

The idea of cyclic unifiers is that the need for infinite values for
some variables is encoded via the use of the cyclic rewrite system
$R$, which allows us to solve the various occur-check equations
generated when unifying $P$.  Finite variables are instantiated by the
finite substitution $\eta$, which ensures that cyclic unification
reduces to finite unification in the absence of infinite
variables. The technical restrictions on $\Dom{\eta}$ and $\Ran{\eta}$
aim at making $\eta$ idempotent. In (iii), parameters occurring in $R$
are instantiated by $\eta$ before rewriting takes place: cyclic
unification is nothing but rigid unification modulo the cyclic
equations in $R$~\cite{DBLP:conf/lics/GallierRS87}.  Instantiation of
the infinite variables $\vect{y}$ is delegated to cyclic solutions via
the additional substitution $\rho$ which may also instantiate the
variables introduced by $\eta$.

\begin{example}
\label{ex:cu}
Consider the equation $f(x,z,z) \!=\!f(a,y,c(y))$.  A cyclic unifier
is $\langle \{x\mapsto a\}, \{y \!\ra\! c(z), z\!\ra\!c(z)\}\rangle$,
and a cyclic solution is $\langle \{x \!\mapsto\! a, y\!\mapsto\!  a,
z\!\mapsto\!  c(a)\}, \{y \!\ra\! c(z), z\!\ra\!c(z)\}\rangle$, which
is clearly an instance of the former by the substitution $\{y\mapsto
a, z\mapsto c(a)\}$.  For the former, $ f(a,z,z)
\!\cc{\{y=c(z),z=c(z)\}}\! f(a,y,c(y))$.  Another cyclic unifier is
$\langle \{x \!\mapsto\! a\}, \{z \!\ra\! c(y), y \!\ra\!
c(y)\}\rangle$, for which $f(a,z,z)\!\cc{\{z=c(y),y=c(y)\}}\!f(a,
y,c(y)) $.
\end{example}

The set of cyclic unifiers of a problem $P$ is closed under
substitution instance, provided the variable conditions on its
substitution part are met, as is the set of its unifiers. Cyclic
unifiers have indeed many interesting properties similar to those of
finite unifiers, of which we are going to investigate only a few which
are relevant to the confluence of layered systems.

We now focus our attention on specific cyclic unifiers
sharing a same cyclic rewrite system.

\begin{definition}
\label{l:ecu}
Given a unification problem $P$
with solved form
$S=\vect{x}=\vect{u}\land\vect{y}=\vect{v}$, let

- its set of parameters
$\cP=\Var{P}\setminus(\vect{x}\cup\vect{y})$,

- its cyclic rewrite system $R_S=\{\vect{y}\ra\vect{v}\}$
and \emph{canonical} substitution $\eta_S=\{\vect{x}\mapsto\vect{u}\}$,

- its $S$-based cyclic unifiers $\langle \eta, R_S\rangle$, among
which $\langle\eta_S, R_S\rangle$ is
said to be \emph{canonical}.
\end{definition}

We now show a major property of $S$-based cyclic unifiers, true for any solved form $S$:

\begin{lemma}
\label{l:pocu}
Given a unification problem with solved form $S$, the set of $S$-based
cyclic unifiers is preserved by the unification rules.
\end{lemma}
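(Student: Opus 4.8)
The plan is to show that each unification rule, when applied to a problem $P$, maps $S$-based cyclic unifiers of $P$ bijectively onto $S$-based cyclic unifiers of the transformed problem $P'$, where $S$ is the common solved form reached by both (since the rules are unifier-preserving and confluent on solved forms up to renaming). Since the solved form $S$, and hence the fixed data $\cP$, $\vect{y}$, $\vect{v}$, and the cyclic rewrite system $R_S=\{\vect{y}\ra\vect{v}\}$, depend only on the original problem and not on the derivation, the cyclic rewrite system component is literally unchanged by a rule step. What I must verify is that a pair $\langle\eta, R_S\rangle$ satisfies conditions (i)--(iii) of Definition~\ref{d:cu} for $P$ if and only if it satisfies them for $P'$.

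First I would dispose of conditions (i) and (ii). Condition~(i) constrains only $\Dom{\eta}$, $\Ran{\eta}$ and $\vect{y}$; since $\vect{y}$ is determined by $S$ and is invariant, and since no unification rule enlarges the variable set in a way affecting these inclusions (the rules only remove equations, decompose, or substitute variables already present), (i) transfers verbatim. Condition~(ii) is exactly soundness: $P$ and $P'$ have identical solution sets by the Soundness part of the preceding theorem, so $P\land R_S$ and $P'\land R_S$ do as well, giving (ii) for one iff the other. The real content is condition~(iii), which I would verify rule by rule.

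For (iii) I would use formulation (iii.a): $\langle\eta, R_S\rangle$ is $S$-based iff $(\forall u=v\in P)\; u\eta\cc{R_S\eta}v\eta$. The key observation is that $\cc{R_S\eta}$ is a \emph{congruence}, so it is preserved under the term-forming operations that the rules perform. Concretely: \textbf{Decomp} replaces $f(\vect{s})=f(\vect{t})$ by $\vect{s}=\vect{t}$, and $f(\vect{s})\eta\cc{R_S\eta}f(\vect{t})\eta$ holds iff $s_i\eta\cc{R_S\eta}t_i\eta$ for each $i$, because the theory treats $f$ as a free symbol (variables, not function symbols, are the constants of the congruence closure). \textbf{Remove} drops a trivially-satisfied equation $s=s$, which no pair can violate. \textbf{Choose} and \textbf{Swap} merely reorient an equation, and $\cc{R_S\eta}$ is symmetric. \textbf{Coalesce}, \textbf{Replace}, \textbf{Merge}, and \textbf{Merep} apply a variable substitution $\{x\mapsto t\}$ to $P$; here I would argue that if $x=t$ (or $x=y$) is already an equation of $P$ known to be satisfied modulo $R_S\eta$, then substituting $t$ for $x$ throughout yields a congruent conjunction, using that $x\eta\cc{R_S\eta}t\eta$ together with the fact that $\cc{R_S\eta}$ is closed under replacing congruent subterms. \textbf{Conflict} is vacuous since it produces $\perp$ only when no solution, finite or infinite, exists.

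The main obstacle will be the substitution rules, and within them the interaction between $\eta$ and the parameters/variables being eliminated. The delicate point is that when \textbf{Replace} or \textbf{Merep} substitutes $x\mapsto s$, the variable $x$ may be a parameter (in $\cP$) that $\eta$ instantiates, so I must check that applying $\eta$ and then substituting commutes correctly with the congruence, i.e. that $(P\{x\mapsto s\})\eta$ is congruent to $P\eta$ modulo $R_S\eta$ precisely when $x\eta\cc{R_S\eta}s\eta$ holds --- which is itself guaranteed because the equation $x=s$ being processed is one of the equations of $P$ satisfied by the hypothesis. I would handle this by treating $x\eta$ and $s\eta$ as congruent ground-plus-parameter terms and invoking the congruence-closure replacement property; the variable conditions in Definition~\ref{d:cu}(i) guaranteeing $\eta$ idempotent are exactly what make this clean, since they prevent $\eta$ from reintroducing the eliminated variable. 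Once all rules are checked in both directions, the invariance of the $S$-based cyclic unifier set follows.
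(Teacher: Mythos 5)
Your proof follows essentially the same route as the paper's: a rule-by-rule verification that condition (iii) of Definition~\ref{d:cu} is invariant, using that $\cc{R_S\eta}$ is a congruence stable under the substitutions performed by \Coalesce and \Replace, and that $R_S\eta$-rewriting cannot change the root symbol of an $\cF$-headed term (which handles \Decompose and \Conflict). The only slip is that \Merge and \Merep do not apply a substitution to $P$ --- they replace $x=s\land x=t$ by $x=s\land s=t$ (resp.\ $y=x$ by $y=s$) --- so the relevant congruence property there is transitivity rather than subterm replacement, but this does not affect the correctness of the argument.
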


\begin{proof}
The result is straightforward for \Remove, \Choose, and \Swap. It is
true for \Decompose and \Conflict since, using formulation (iii.b) of
Definition~\ref{d:cu}, the rules in $R\eta$ cannot apply at the root
of $\cF$-headed terms.  Next comes \Coalesce. We need to prove that
$\langle \eta,R\rangle$ is a cyclic unifier for $x=y\land P$ iff it is
one for $x=y\land P\{x\mapsto y\}$. Let $u=v\in P$. For the only if
case, we have $u\{x\mapsto y\}\eta=u\eta\{x\eta\mapsto
y\eta\}\cc{R\eta}u\eta \cc{R\eta}v\eta\cc{R\eta}v\eta\{x\eta\mapsto
y\eta\}=v\{x\mapsto y\}\eta$. The if case is similar.  \Replace is
similar to \Coalesce.  Consider now \Merge (\Merep is
similar). Showing that $\langle\eta,R\rangle$ is a cyclic unifier for
$x=s\land x=t \land P$ iff it is one for $x=s\land s=t \land P$ is
routine by using transitivity of the congruence closure $\cc{R\eta}$.
\end{proof}

We can now conclude:

\begin{theorem}
\label{t:mgcu}
Given a unification problem $P$ with solved form
$S\!=\!\vect{x}\!=\!\vect{u}\land\vect{y}\!=\!\vect{v}$, the
canonical $S$-based cyclic unifier is most general among the set of
$S$-based cyclic unifiers of $P$.
\end{theorem}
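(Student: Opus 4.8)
The plan is to reduce the claim to the case $P=S$ and then exhibit, for an arbitrary $S$-based cyclic unifier, an explicit instantiating substitution. First I would invoke Lemma~\ref{l:pocu}: since $P$ reaches $S$ by a sequence of unification rules and each rule preserves the set of $S$-based cyclic unifiers, the $S$-based cyclic unifiers of $P$ and of $S$ coincide. Hence it suffices to prove that $\langle\eta_S,R_S\rangle$ is most general among the $S$-based cyclic unifiers of $S$ itself. In passing this also records that $\langle\eta_S,R_S\rangle$ is genuinely a cyclic unifier: on $S$, condition (iii.a) holds because every equation $x=u\in\vect{x}=\vect{u}$ satisfies $x\eta_S=u=u\eta_S$ (as $\Var{u}\subseteq\cP$), while every $y=v\in\vect{y}=\vect{v}$ gives $y\eta_S=y\cc{R_S\eta_S}v=v\eta_S$, using $R_S\eta_S=R_S$ (since $\eta_S$ fixes $\cP\cup\vect{y}\supseteq\Var{v}$) together with the rule $y\ra v\in R_S$.

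Now fix an $S$-based cyclic unifier $\langle\eta,R_S\rangle$ of $S$. I claim $\eta_S\gesubs\eta$, witnessed by $\theta:=\eta$; that is, $\eta_S\eta=\eta$. On a parameter $z\in\cP$ this is immediate since $z\eta_S=z$, and on a cyclic variable $y\in\vect{y}$ both sides equal $y$ because $\vect{y}$ is disjoint from $\Dom{\eta}$ and from $\Dom{\eta_S}=\vect{x}$. The only nontrivial variables are the finite ones: for $x=u\in\vect{x}=\vect{u}$ I must show $(\eta_S\eta)(x)=u\eta=x\eta$, so everything reduces to the single equality $u\eta=x\eta$ holding \emph{syntactically}.

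What (iii.a) supplies directly is only $x\eta\cc{R_S\eta}u\eta$, so the crux, and the main obstacle, is to promote this congruence to a syntactic identity. The key observation is that neither side contains a cyclic variable: $x\eta$ avoids $\vect{y}$ since $\Ran{\eta}\cap\vect{y}=\emptyset$, and $u\eta$ avoids $\vect{y}$ since $\Var{u\eta}\subseteq\cP\cup\Ran{\eta}$, both disjoint from $\vect{y}$. Because $\eta$ fixes $\vect{y}$, the rewrite system $R_S\eta$ has the pairwise distinct constants $\vect{y}$ as its lefthand sides, so it is ground and critical-pair-free, hence Church-Rosser by the argument of Lemma~\ref{l:crs}; moreover a term lies in $R_S\eta$-normal form as soon as it contains no occurrence of $\vect{y}$. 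Thus $x\eta$ and $u\eta$ are two $R_S\eta$-normal forms that are $\cc{R_S\eta}$-congruent, and Church-Rosser forces them to be equal. This yields $u\eta=x\eta$, completing $\eta_S\eta=\eta$ and hence $\eta_S\gesubs\eta$.

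Finally, since $R_S$ is common to all $S$-based cyclic unifiers, $\eta_S\gesubs\eta$ for every such $\eta$ means precisely that each $S$-based cyclic unifier $\langle\eta,R_S\rangle=\langle\eta_S\theta,R_S\rangle$ is an instance of the canonical one, which is therefore most general; the validity of this instance is automatic, as $\langle\eta,R_S\rangle$ is already assumed to meet the variable conditions of Definition~\ref{d:cu}. The delicate point throughout is the normal-form and Church-Rosser argument of the third paragraph; the remainder is the familiar most-general-unifier computation transported to solved forms, made legitimate by the reduction to $P=S$.
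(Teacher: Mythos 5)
Your proposal is correct and follows essentially the same route as the paper: reduce to the solved form via Lemma~\ref{l:pocu}, then show $\eta=\eta_S\eta$ by proving $x\eta=u\eta$ for each $x=u\in\vect{x}=\vect{u}$, using the fact that neither side contains a variable of $\vect{y}$ and is therefore $R_S\eta$-irreducible, so that condition (iii) forces syntactic equality. The paper argues directly from joinability (iii.b) of two irreducible terms where you pass through the congruence (iii.a) and Church--Rosser of $R_S\eta$, but these are interchangeable by Lemma~\ref{l:crs}, so the two proofs coincide in substance.
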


\begin{proof}
Let $\langle \eta, R_S\rangle$ be a cyclic unifier of $P$ based on
$S$.  

Let $x=u\in\vect{x}=\vect{u}$. By definition of cyclic
unification, $x\eta \lrdertwo{}{R_S\eta} \rldertwo{}{R_S\eta} u\eta$. By
definition of a solved form and cyclic unifiers, we have:
$\Var{x\eta,u\eta}\subseteq(\vect{x}\cup\cP\cup\Ran{\eta})$,
$(\vect{x}\cup\cP)\cap\vect{y}=\emptyset$,
$\Ran{\eta}\cap\vect{y}=\emptyset$, and
$\vect{y}\cap\Dom{\eta}=\emptyset$. Therefore, $x\eta$ and $u\eta$ are
irreducible by $R_S\eta$. Hence $x\eta=u\eta$. Since $x\eta_S=u$, it
follows that $x\eta=u\eta= (x\eta_S)\eta=x(\eta_S\eta)$.

Let now $z\in\Var{P}\setminus\vect{x}$. Since
$z\not\in\Dom{\eta_S}$, then
$\eta(z)=z\eta=(z\eta_S)\eta=z(\eta_S\eta)$.

Therefore, $\eta=\eta_S\eta$ and we are done.
\end{proof}

This result, which suffices for our needs, is easily lifted to cyclic
solutions, as they are instances of a cyclic unifier. We can further prove
that $\eta_S$ is more general than any $S'$-based cyclic unifiers, for
any solved form $S'$ of $P$. This is where our conditions on
$\Ran{\eta}$ become important. We conjecture that it is most general
among the set of all cyclic unifiers.

\section{Layered systems}
\label{s:layer}
\newcommand{\StOF}{\mbox{(DLO)}\xspace}
\newcommand{\SuOF}[1]{\mbox{SOF($#1$)\xspace}}
\newcommand{\OF}[1]{\mbox{OF($#1$)}\xspace}
\newcommand{\Conv}[2]{\mbox{Conv$_n^\theta$\!($#1$,$#2$)}\xspace}
\newcommand{\Equal}[1]{\mbox{Equalize$_n$\!($#1$)$^\theta_\sigma$}\xspace}

NKH is non-confluent, but can be easily made confluent by adding the
rule $a\ra b$ (giving NKH$^1$), or removing the rule $g\ra c(g)$
(giving NKH$^2$). It can be made non-right-ground by making the
symbols $a,b$ unary (using $a(c(x))$ and $b(c(x))$ in the righthand
sides of rules, giving NKH$^3$), or even non-right-linear by making
them binary (giving NKH$^4$). There are classes of systems containing
NKH for which it is possible to conclude its non-confluence. The
following classes succeed for NKH$^1$:
simple-right-linear~\cite{DBLP:conf/ctrs/ToyamaO94}, strongly
depth-preserving~\cite{gomi}, and relatively
terminating~\cite{DBLP:conf/lpar/KleinH12}.  As for NKH$^3$, it is
neither simple-right-linear nor strongly depth-preserving: only
\cite{DBLP:conf/lpar/KleinH12} can cover it. When it comes to NKH$^4$,
relative termination becomes hard to satisfy in presence of
non-right-linearity~\cite{DBLP:conf/lpar/KleinH12}.

Our goal is to define a robust, Turing-complete class of rewrite
systems capturing NKH and its variations, for which confluence can
be analyzed in terms of critical diagrams.
\begin{definition}
A rewrite system $R$ is \emph{layered} iff it satisfies the
\emph{disjointness} assumption \StOF that \emph{linearized}
overlaps of some lefthand sides of rules upon a given lefthand side
$l$ can only take place at a multiset of disjoint or equal positions
of $\FPos{l}$:
\[\begin{array}{ccl}
\StOF & := & (\forall l\ra r \in R)\, (\forall p\in\FPos{l})\,
(\forall g\ra d\in R \mbox{
  s.t. }\Var{\lin{l}}\cap\Var{\lin{g}}=\emptyset)\\
& & (\forall\sigma:\Var{\lin{l}|_p, \lin{g}} \ra \TFX \mbox{ s.t. }
\lin{l}|_p\sigma=\lin{g}\sigma)\, \SuOF{l|_p}\land\SuOF{g} \\
\SuOF{u} & := & (\forall q\in\FPos{u}\!\setminus\!\{\rootp\})\,
\OF{u|_q}\\
\OF{v} & := & (\forall g\ra d\!\in\! R \mbox{
  s.t. }\Var{\lin{v}}\cap\Var{\lin{g}}=\emptyset)\, \\
& & (\forall o\in\FPos{v})(\forall\sigma:\Var{\lin{v}, \lin{g}} \ra
\TFX)\,\, \lin{v}|_o\sigma\neq \lin{g}\sigma
\end{array}\]
\end{definition}
SOF stands for \emph{subterm overlap-free}, and OF for
\emph{overlap-free}. In words, if two lefthand sides of rules in $R$
overlap (linearly) a lefthand side $l$ of a rule in $R$ at positions
$p$ and $q$ respectively, then either $p=q$ or $p\#q$. Overlaps at
different positions along a path from the root to a leaf of $l$ are
forbidden.

\hide{
on a given path from the root to a leaf
of a lefthand side of rule $l$, there can be at most one position where
$l|_p$ is (linearly) unifiable with a lefthand side of rule in $R$.
Different overlaps may still take place at positions disjoint from $p$.

In other words, all linearized overlaps occurring
on a given path from the root to a leaf of a lefthand side must take
place at the same position. Overlaps may still take place at disjoint
positions on different paths. 
}


Layered systems is a decidable class that relates to overlay
systems~\cite{DBLP:conf/ctrs/DershowitzOS87}, for which overlaps
computed with plain unification can only take place at the root of
terms --hence their name--, and generalizes strongly non-overlapping
systems~\cite{DBLP:journals/ipl/SakaiO10} which admit no linearized
overlaps at all. All these classes are Turing-complete since they
contain a complete class~\cite{Klop93}.

\begin{example}
NKH is a layered system, which is also overlay. $\{h(f(x,y)) \ra a,
f(x,c(x)) \ra b\}$ is layered but not overlay. $\{h(f(x,x)) \ra a,
f(x,c(x))\ra b, g\ra c(g)\}$ is layered, but not strongly
non-overlapping. $\{f(h(x))\ra x, h(a)\ra a, a\ra b\}$ is not overlay
nor layered: \SuOF{h(x)} succeeds while \SuOF{h(a)} fails, hence their
conjunction fails.
\end{example}

\subsection{Layering}
\label{ss:layering}
We define the rank of a term $t$ as the maximum number of
non-overlapping linearized redexes traversed from the root to some
leaf of $t$, which differs from the usual redex-depth.

\begin{definition}
\label{d:rank}
Given a layered rewrite system $R$, the \emph{rank} $\rank{t}$ of a
term $t$ is defined by induction on the size of terms as follows:

- the maximal rank of its immediate subterms if $t$ is not a linearized
redex\,; otherwise,

- 1 plus $max\{\rank{\sigma} : (\exists l\ra r\in
R)\,t=\lin{l}\sigma\}$, where $\rank{\sigma} := max\{\rank{\sigma(x)} :
x\in\Var{\lin{l}}\}$.
\end{definition}

\begin{definition}
A rewrite system $R$  is \emph{rank non-increasing} if
for all terms $u,v$ such that $u\dlrps{}{R}v$, then $\rank{u}\geq\rank{v}$.
\end{definition}

The rewrite system $\{f(x)\ra c(f(x))\}$ is rank non-increasing while
$\{f(x) \ra f(f(x))\}$ is rank increasing.  The system $\{fib(0)\ra 0,
fib(S(0))\ra S(0), fib(S(S(x))) \ra fib(S(x)) + fib(x)\}$ calculating
the Fibonacci function is rank non-increasing.  NKH is rank
non-increasing. The coming decidable sufficient condition for rank
non-increasingness captures our examples (but Fibonacci, for which an
even more complex decidable property is needed):

\begin{lemma}
\label{l:rkcond}
A layered rewrite system $R$ is rank non-increasing if each rule 
$g\ra d$ in $R$ satisfies the following properties:
\begin{enumerate}[(i)]
\item
$((\forall l\ra r\in R)(\forall l'\ra r'\in R)
\mbox{ s.t. }\Var{d},\Var{\lin{l}},\Var{\lin{l'}}\mbox{ are pairwise disjoint})\\ 
(\forall p,q\in\FPos{d}\mbox{ s.t. } q>p\cdot\FPos{l})\\
(\forall\sigma:\Var{g,\lin{l},\lin{l'}}\ra\GTF) 
\,\,(d|_p\sigma\not=\lin{l}\sigma) \lor (d|_q\sigma\not=\lin{l'}\sigma)$\,;

\item
$((\forall l\ra r\in R)
\mbox{ s.t. }\Var{g}\cap\Var{\lin{l}}=\emptyset)
(\forall p\in\FPos{l}\setminus\rootp)\\
(\forall\sigma:\Var{g,\lin{l}}\ra\GTF \mbox{ s.t. }
d\sigma=\lin{l}|_p\sigma) \\
((\exists l'\ra r'\in R)\mbox{ s.t. }\Var{\lin{l'}}\cap\Var{g,\lin{l}}=\emptyset)
(\exists x\!\not\in\!\Var{\lin{l},\lin{l'},g})
\,\,\lin{l}[x]_p \,\gesubs\, \lin{l'}$.
\end{enumerate}
\end{lemma}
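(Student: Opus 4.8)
The plan is to fix one rewrite step $u\lrps{o}{R}v$, say with rule $g\ra d$, so that $u|_o=g\tau$ and $v=u[d\tau]_o$, and to show $\rank{u}\geq\rank{v}$, which is exactly the claim. Two preliminary observations will be used freely. First, the rank is monotone under taking subterms, $\rank{t|_q}\leq\rank{t}$, by an easy induction on $q$. Second, $\rank{t}$ equals the maximal length of a \emph{chain} $q_1<\dots<q_k$ of positions such that each $t|_{q_i}$ is a linearized redex, say of $l_i\ra r_i$, and $q_{i+1}\gtordp q_i\cdot\FPos{l_i}$, i.e.\ consecutive redexes are non-overlapping, the inner one lying strictly below the skeleton of the outer one; this reformulation follows directly from Definition~\ref{d:rank}. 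Finally, since the conditions (i),(ii) quantify over ground substitutions while the rank is invariant under replacing the free variables of a term by fresh constants, I may assume throughout that all substitutions at hand are ground, so that (i) and (ii) apply to them.

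\textbf{Root case ($o=\rootp$), using (i).} Here $u=g\tau$, $v=d\tau$, and I claim $\rank{d\tau}\leq\rank{g\tau}$. Take a chain realizing $\rank{d\tau}$. Since a position below a variable instance of $d$ stays inside that instance, the indices $i$ with $q_i\in\FPos{d}$ form an initial segment of the chain. If two consecutive such indices existed, say $q_i,q_{i+1}\in\FPos{d}$ with $q_{i+1}\gtordp q_i\cdot\FPos{l_i}$, then choosing (after renaming) disjoint variables and the ground substitution that agrees with $\tau$ on $\Var{g}$, with the matcher of $d\tau|_{q_i}$ on $\Var{\lin{l_i}}$ and that of $d\tau|_{q_{i+1}}$ on $\Var{\lin{l_{i+1}}}$, would contradict (i). Hence at most one chain redex has its root in $\FPos{d}$, the remaining ones lying inside a single variable instance $\tau(x)$ with $x\in\Var{d}\subseteq\Var{g}$. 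Since $g\tau$ is itself a linearized redex of $g$, $\rank{g\tau}\geq 1+\max_{y\in\Var{g}}\rank{\tau(y)}$, and bounding the chain by $1+\rank{\tau(x)}$ (or by $\rank{\tau(x)}$ when no skeleton redex occurs) yields $\rank{d\tau}\leq\rank{g\tau}$.

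\textbf{Context case ($o\neq\rootp$), using (ii).} Lift a chain $q_1<\dots<q_k$ realizing $\rank{v}$ to a chain of length $\geq k$ in $u$. The redexes rooted at or below $o$ form a suffix lying inside $v|_o=d\tau$, of length bounded by $\rank{d\tau}\leq\rank{g\tau}$ by the root case. Each shallower redex $q_i<o$ has $o$ inside $v|_{q_i}=\lin{l_i}\sigma_i$, in fact inside a variable instance of $l_i$ for every $q_i$ except possibly one index $j$ at which $o$ falls in the \emph{skeleton}, $o=q_j\cdot p$ with $p\in\FPos{l_j}\setminus\rootp$ and $d\tau=\lin{l_j}|_p\sigma_j$; there is at most one such $j$ because the chain is non-overlapping. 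Redexes with $o$ in a variable instance survive verbatim in $u$, only the value in one variable slot changing from a $d\tau$-containing term to a $g\tau$-containing one. At the skeleton index, condition (ii) supplies $l'\ra r'$ and a fresh $x$ with $\lin{l_j}[x]_p\gesubs\lin{l'}$; writing $\lin{l_j}[x]_p=\lin{l'}\theta$ and using that $x$ occurs only at $p$ gives $u|_{q_j}=\lin{l_j}\sigma_j[g\tau]_p=\lin{l'}\bigl(\theta(\sigma_j\cup\{x\mapsto g\tau\})\bigr)$, so $u|_{q_j}$ is a linearized redex of $l'$ carrying $g\tau$ inside one of its variable instances. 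In either case, $\rank{g\tau}\geq\rank{d\tau}$ and monotonicity guarantee that the variable instance of $u$ now holding $g\tau$ has rank at least the length of the suffix it must absorb; prepending the surviving shallower redexes produces a chain of length $\geq k$ in $u$, whence $\rank{u}\geq\rank{v}=k$.

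The main obstacle is the positional bookkeeping of the context case: verifying that $o$ can lie in the skeleton of \emph{at most one} chain redex (a consequence of non-overlappingness), that replacing the outer rule $l_j$ by the more general $l'$ preserves non-overlappingness with the surrounding redexes --- here $\lin{l_j}[x]_p\gesubs\lin{l'}$ yields $\FPos{l'}\subseteq\FPos{\lin{l_j}[x]_p}\subseteq\FPos{l_j}$ --- and that the layers counted inside $d\tau$ are absorbed by $g\tau$ without being double counted against the outer redexes. The remaining ingredients (invariance of the rank under replacing free variables by fresh constants, the chain reformulation of Definition~\ref{d:rank}, and monotonicity of the rank under subterms) are routine.
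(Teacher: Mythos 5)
The paper states Lemma~\ref{l:rkcond} without any proof, so there is no official argument to compare yours against; I can only assess it on its own merits. Your overall strategy --- recast $\rank{\cdot}$ as the maximal length of a chain of nested, non-overlapping linearized redexes, use condition~(i) to show the skeleton of $d$ contributes at most one link in the root case, and use condition~(ii) to recover a redex of the more general lefthand side $l'$ at the unique chain position whose skeleton contains the rewrite position $o$ --- is sound, and the positional bookkeeping you single out as the main obstacle does go through: at most one chain redex can have $o$ in its skeleton, it is necessarily the last strict prefix of $o$ occurring in the chain, and $\FPos{l'}\subseteq\FPos{\lin{l_j}[x]_p}$ is indeed what keeps the surrounding non-overlap constraints valid after the substitution of $l'$ for $l_j$.

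The one step that is genuinely unjustified as written is the claim that the chain reformulation ``follows directly from Definition~\ref{d:rank}''. It does not: for a linearized redex $t=\lin{l}\sigma$ the definition sets $\rank{t}=1+\rank{\sigma}$ and deliberately ignores the immediate subterms, so a chain starting at a position in $\FPos{l}\setminus\{\rootp\}$ is not accounted for. The paper's own example $\{f(g(h(x)))\ra x,\ g(x)\ra x,\ h(x)\ra x\}$ refutes the reformulation for general systems: $f(g(h(a)))$ has rank $1$ but admits the chain $(1,11)$ of length $2$. The reformulation is true under the lemma's standing hypothesis that $R$ is layered, but that needs an argument: instantiating \StOF at $p=\rootp$ with a renamed copy of the same rule yields $\SuOF{l}$ for every lefthand side $l$, hence no proper non-variable subterm position of $\lin{l}\sigma$ can carry a linearized redex, so every chain inside a redex either starts at its root or lies entirely within one variable instance; only then does induction give the equivalence (and, as a byproduct, the subterm monotonicity of rank that you also invoke, which the paper explicitly says fails for non-layered systems). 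Since your entire proof rests on this characterization, the dependence on layeredness must be made explicit. A second, minor repair: conditions~(i) and~(ii) quantify over substitutions into $\GTF$, and replacing the matchers' variables by ``fresh constants'' takes you outside $\GTF$; the correct reduction is that any non-ground witness of the forbidden equalities can be ground-instantiated because $\GTF\neq\emptyset$, which is what actually lets (i) and (ii) fire.
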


We can now index term-related notions by the rank of terms.
Let \Tn (in short, $\cT_n$) be the set of terms of rank at most $n$.
Two terms in $\cT_n$ are \emph{$n$-convertible}
(resp. \emph{$n$-joinable}) if their \R-conversion
(resp. \R-joinability) involves terms in $\cT_n$ only.

\subsection{Closure properties}
\label{ss:}
\hide{
Call \emph{OF-term} a term whose all subterms $u$ satisfy \OF{u}, and
\emph{OF-substitution} a substitution mapping variables to OF-terms. }
Call a term $u$ an \emph{OF-term} if $u$ satisfies \OF{u}, and a
substitution an \emph{OF-substitution} if it maps variables to
OF-terms.  OF-terms enjoy several important closure properties. Given
two substitutions $\theta,\sigma$ and rank $n$, let

$\Conv{\lin{u}}{\lin{v}}$ iff $\lin{u}\theta$ and $\lin{v}\theta$ are
$n$-convertible, and 

$\Equal{\lin{u}}$ iff $\lin{u}\theta \lrdertwo{}{\Rsub} u\sigma$ with
terms of rank at most $n$.

\begin{lemma}
\label{l:sof}
For all OF-terms $u$ and substitutions $\gamma$,
$u\gamma$ cannot sub-rewrite at a position $p\!\in\!\FPos{u}$.
\end{lemma}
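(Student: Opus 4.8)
The plan is to argue by contradiction, reducing the claim to the unification condition built into $\OF{u}$. Suppose $u\gamma$ sub-rewrites at some position $p\in\FPos{u}$ with a rule $l\ra r\in R$. By the sub-rewriting redex lemma (Lemma~\ref{l:linredex}) applied to the term $u\gamma$, the sub-rewriting redex $(u\gamma)|_p$ is an instance of the linearized lefthand side, that is $(u\gamma)|_p=\lin{l}\theta$ for some substitution $\theta$. Renaming $l$ if necessary (and renaming $\theta$ accordingly, so that the equality still holds), I may assume $\Var{\lin{l}}\cap\Var{\lin{u}}=\emptyset$, which is exactly the disjointness premise needed to instantiate $\OF{u}$ with $g\ra d:=l\ra r$ and $o:=p$.

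First I would observe that, because $p$ is a non-variable position of $u$, taking the subterm commutes with the substitution: $(u\gamma)|_p=(u|_p)\gamma$. Hence $(u|_p)\gamma=\lin{l}\theta$. Since $u|_p$ is itself an instance of $\lin{u}|_p$ (collapsing the linearized copies $x^{k_i}$ back to their original $x$), this exhibits $(u|_p)\gamma$ as a common instance of $\lin{u}|_p$ and of $\lin{l}$.

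The heart of the argument is to turn this common instance into a single unifier witnessing $\lin{u}|_p\sigma=\lin{l}\sigma$, which contradicts $\OF{u}$. Using the disjointness of variables, I would define $\sigma$ on $\Var{\lin{u}}\cup\Var{\lin{l}}$ by setting $\sigma(x^{k_i})=\gamma(x)$ for every linearized variable $x^{k_i}$ of $\lin{u}$ (the renamed occurrence of $x\in\Var{u}$), and $\sigma(y)=\theta(y)$ for every $y\in\Var{\lin{l}}$. Then $\lin{u}|_p\sigma=(u|_p)\gamma$, since sending each linearized occurrence of $x$ to $\gamma(x)$ is precisely the effect of $\gamma$ on $u|_p$, while $\lin{l}\sigma=\lin{l}\theta$. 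Combining with $(u|_p)\gamma=\lin{l}\theta$ yields $\lin{u}|_p\sigma=\lin{l}\sigma$, contradicting $\OF{u}$ at position $p$ for the rule $l\ra r$. As no such $p$ can exist, $u\gamma$ cannot sub-rewrite at any position of $\FPos{u}$.

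I expect the only delicate point to be the bookkeeping of linearization: one must ensure the \emph{same} substitution $\sigma$ serves both sides of the (in)equation in $\OF{u}$. This works precisely because $\sigma$ is allowed to assign the single value $\gamma(x)$ to all the linearized copies $x^{k_i}$, recollapsing $\lin{u}|_p$ back to $(u|_p)\gamma$; no extra freedom of linearization is required. The remaining steps — the commutation $(u\gamma)|_p=(u|_p)\gamma$ valid for $p\in\FPos{u}$, and the renaming securing disjoint variables — are routine. Note that neither layeredness nor rank non-increasingness is used: the statement rests solely on Lemma~\ref{l:linredex} and the definition of the overlap-free predicate $\OF{u}$.
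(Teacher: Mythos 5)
The paper states Lemma~\ref{l:sof} without proof, and your argument is correct and is the evident intended one: Lemma~\ref{l:linredex} gives $(u\gamma)|_p=(u|_p)\gamma=\lin{l}\theta$, and collapsing the linearized copies $x^{k_i}$ to $\gamma(x)$ while keeping $\theta$ on the (disjoint) variables of $\lin{l}$ yields a single $\sigma$ with $\lin{u}|_p\sigma=\lin{l}\sigma$, contradicting $\OF{u}$. Your bookkeeping of the linearization and of the variable-disjointness premise is exactly the point that needs care, and you handle it correctly.
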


\begin{corollary}
\label{c:OFinst}
OF-terms are preserved under instantiation by OF-substitutions.
\end{corollary}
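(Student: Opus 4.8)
The plan is to establish \OF{u\gamma} directly, arguing by contradiction. Suppose $u\gamma$ is not an OF-term, so there are a position $o\in\FPos{u\gamma}$, a rule $g\ra d$ with $\Var{\lin{u\gamma}}\cap\Var{\lin g}=\emptyset$, and a substitution $\sigma$ with $\lin{u\gamma}|_o\sigma=\lin g\sigma$; equivalently, $\lin{u\gamma}|_o$ and $\lin g$ are unifiable. Every $o\in\FPos{u\gamma}$ has exactly one of two shapes: either (A) $o\in\FPos u$, so that $(u\gamma)|_o=(u|_o)\gamma$ lives in the skeleton of $u$; or (B) $o=p\cdot q$ with $p\in\VPos u$, $u(p)=x\in\cX$, and $q\in\FPos{\gamma(x)}$, so that $(u\gamma)|_o=\gamma(x)|_q$ lives inside the instance of a variable. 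I would derive a contradiction in each case, using \OF{u} in case (A) and the OF-substitution hypothesis \OF{\gamma(x)} in case (B).

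For case (A), the key observation --- precisely the one underlying Lemma~\ref{l:sof} --- is that $\lin{u\gamma}|_o$ is an instance of $\lin u|_o$. Indeed, writing $u=\lin u\,\rho$ for the de-linearizing substitution $\rho$ that undoes the renaming of $\lin u$, one has $u|_o=(\lin u|_o)\rho$ and hence $(u\gamma)|_o=(\lin u|_o)(\rho\gamma)$; since $\lin u|_o$ is linear, linearizing this term yields $\lin{u\gamma}|_o=(\lin u|_o)\mu$, where $\mu$ sends each variable of $\lin u|_o$ to a fresh linearization of its image under $\rho\gamma$. Consequently $(\lin u|_o)(\mu\sigma)=\lin{u\gamma}|_o\sigma=\lin g\sigma$, so $\lin u|_o$ and $\lin g$ admit a common instance; as their variables may be taken disjoint, they are unifiable, contradicting \OF{u}.

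For case (B), I would use that linearization is local to subterms up to renaming of the fresh variables it introduces: both $\lin{u\gamma}|_o$ and $\lin{\gamma(x)}|_q$ are linearizations of the one term $\gamma(x)|_q$, hence coincide up to a renaming that can be chosen disjoint from $\lin g$. Since unifiability is invariant under such renamings, $\lin{\gamma(x)}|_q$ is unifiable with $\lin g$ at $q\in\FPos{\gamma(x)}$, so \OF{\gamma(x)} fails. This contradicts $\gamma$ being an OF-substitution. Both cases being impossible, \OF{u\gamma} holds and $u\gamma$ is an OF-term.

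The argument is short once the right decomposition is in place, and its only delicate points are bookkeeping ones, which I expect to be the main obstacle. First, the variable-disjointness side conditions of \OF{u} and \OF{\gamma(x)} require attention: the disjointness $\Var{\lin{u\gamma}}\cap\Var{\lin g}=\emptyset$ need not yield $\Var{\lin u}\cap\Var{\lin g}=\emptyset$, because $\gamma$ may erase variables of $u$; one must first rename $\lin g$ apart from $\lin u$ (resp.\ from $\gamma(x)$), which is harmless since unifiability is renaming-invariant and both \OF{} instances quantify over all such renamings. Second, and most in need of care, one must justify that linearization commutes with taking subterms and with the split of $u\gamma$ up to the choice of fresh names --- exactly the identities $\lin{u\gamma}|_o=(\lin u|_o)\mu$ of case (A) and $\lin{u\gamma}|_o=\lin{\gamma(x)}|_q$ (up to renaming) of case (B). These follow from the facts that a linearization of a term restricts to a linearization of each subterm and that $\lin u|_o$ is linear, but making them fully rigorous is the routine technical core of the proof.
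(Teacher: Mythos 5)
Your argument is correct: the case split on whether the offending position $o\in\FPos{u\gamma}$ lies in $\FPos{u}$ (contradicting \OF{u}, since $\lin{u\gamma}|_o$ is an instance of the linear term $\lin{u}|_o$ and a common instance of variable-disjoint terms yields a unifier) or inside some $\gamma(x)$ (contradicting \OF{\gamma(x)}) is exactly the intended reasoning, and your bookkeeping about renaming $\lin{g}$ apart and about linearization commuting with subterms is sound. The paper states this corollary without proof, and your argument is the natural one that the definition of \OF{\cdot} was designed to support.
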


\begin{lemma}
\label{l:conv}
Let $u,v$ be two terms such that $\Conv{\lin{u}}{\lin{v}}$,
$\Equal{\lin{u}}$ and $\Equal{\lin{v}}$. Then $u\sigma$ and
$v\sigma$ are $n$-convertible.
\end{lemma}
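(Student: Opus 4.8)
The plan is to obtain the required $n$-conversion between $u\sigma$ and $v\sigma$ by pasting the two equalizations onto the ends of the given conversion between $\lin{u}\theta$ and $\lin{v}\theta$, and then to check that every term so produced has rank at most $n$. First I would unfold the three hypotheses sharing the substitution $\theta$: $\Conv{\lin{u}}{\lin{v}}$ provides a conversion $\lin{u}\theta \convv{}{} \lin{v}\theta$ all of whose terms lie in $\cT_n$; $\Equal{\lin{u}}$ provides a sub-rewriting derivation $\lin{u}\theta \lrdertwo{}{\Rsub} u\sigma$ with terms of rank at most $n$; and $\Equal{\lin{v}}$ provides $\lin{v}\theta \lrdertwo{}{\Rsub} v\sigma$ likewise. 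Reversing the first equalization and concatenating then yields the zig-zag
\[
u\sigma \,\rldertwo{}{\Rsub}\, \lin{u}\theta \,\convv{}{}\, \lin{v}\theta \,\lrdertwo{}{\Rsub}\, v\sigma ,
\]
which, by symmetry and transitivity of $\convv{}{}$, is already a conversion between $u\sigma$ and $v\sigma$.

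It remains to recast this as a genuine $R$-conversion staying inside $\cT_n$, which is where the only real work lies. Since $\Rsub \subseteq R^*$, each sub-rewriting step in the two equalizations expands into finitely many plain $R$-steps, so the zig-zag above is an honest $R$-conversion. I would then verify that every term traversed by this expansion has rank at most $n$: the terms of the central conversion are in $\cT_n$ by $\Conv{\lin{u}}{\lin{v}}$, while the endpoints and the intermediate sub-rewriting terms of each equalization are of rank at most $n$ by the Equalize hypotheses. For the plain $R$-steps hidden inside a single sub-rewriting step I would invoke rank non-increasingness: such steps rewrite a term already known to lie in $\cT_n$, and since rank cannot increase, every term they traverse stays in $\cT_n$. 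Hence the whole conversion lies in $\cT_n$, so $u\sigma$ and $v\sigma$ are $n$-convertible.

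The statement is essentially a bookkeeping lemma, so I do not expect a deep obstacle; the single point requiring care is precisely this rank control of the hidden equalization steps, i.e.\ confirming that expanding $\lrdertwo{}{\Rsub}$ into plain rewriting never momentarily leaves $\cT_n$. This is immediate once rank non-increasingness is available; and should the clause ``with terms of rank at most $n$'' in the definition of Equalize already be understood as constraining the full plain-rewriting expansion, even this step disappears and the lemma reduces to pure transitivity of conversion.
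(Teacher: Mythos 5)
Your proposal is correct and matches the paper's own (very terse) proof, which simply builds the conversion $u\sigma\rldertwo{}{\Rsub}\lin{u}\theta \dconvtwo{}{\Rsub} \lin{v}\theta\lrdertwo{}{\Rsub} v\sigma$ and observes it satisfies the conclusion. Your additional care about rank control when expanding sub-rewriting steps into plain $R$-steps is a reasonable elaboration of the same argument, not a different route.
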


\hide{
\begin{proof}
By building a conversion $u\sigma\rldertwo{}{\Rsub}
\lin{u}\theta \dconvtwo{}{\Rsub} \lin{v}\theta\lrdertwo{}{\Rsub} v\sigma$
satisfying the conclusion.
\end{proof}
}

\begin{lemma}
\label{l:unifinvar}
Let $\land_i\, u_i=v_i$ be obtained by decomposition of a unification
problem $P$.  Assume all equations $u_i=v_i$ satisfy the properties
$\Conv{\lin{u_i}}{\lin{v_i}}$, $\Equal{\lin{u_i}}$,
$\Equal{\lin{v_i}}$, $\OF{u_i}$ and $\OF{v_i}$.  Assume further that
$n$-convertible terms are joinable. Then, unification of $P$ succeeds,
and returns a solved form whose all equations satisfy these five
properties.
\end{lemma}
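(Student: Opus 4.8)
The plan is to argue by induction on the length of the unification derivation issuing from $\land_i\, u_i=v_i$, maintaining as invariant that every equation of the current problem satisfies the five properties $\Conv{\lin{u}}{\lin{v}}$, $\Equal{\lin{u}}$, $\Equal{\lin{v}}$, $\OF{u}$, $\OF{v}$, and showing in addition that \Conflict never fires. Since the unification rules terminate (as proved above) and the only other failure rule, {\bf Occur-check}, has been discarded here, these two facts together yield success and a solved form whose equations still enjoy the five properties. Throughout I would move freely between $R$-conversions and $\Rsub$-conversions, using $R\subseteq\Rsub\subseteq R^*$ to identify their symmetric-transitive closures, so that $n$-convertibility may be read in whichever relation is convenient; and from $\Conv{\lin{u}}{\lin{v}}$, $\Equal{\lin{u}}$, $\Equal{\lin{v}}$ and Lemma~\ref{l:conv} I would carry along the derived fact that $u\sigma$ and $v\sigma$ are $n$-convertible, hence $n$-joinable by hypothesis.

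The device that both excludes \Conflict and enables \Decompose is \emph{head preservation}. If $v$ is not a variable, $\OF{v}$ makes $v$ an OF-term, so Lemma~\ref{l:sof} (with $\gamma=\sigma$) already forbids any step at a position of $\FPos{v}$ in $v\sigma$, the root in particular; symmetrically for $u\sigma$. Granting that this head-immutability persists all along the joining derivation, the common reduct $w$ in $u\sigma\to^* w\leftarrow^* v\sigma$ must carry both shared root symbols, which therefore coincide. This rules out \Conflict outright, and shows that whenever both sides of an equation are non-variable they share the same head, so that \Decompose applies without ever producing $\perp$.

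For \Decompose itself I would descend the properties to each argument. From $\OF{f(\vec s)}$ one reads off $\OF{s_j}$, since $\FPos{s_j}$ embeds into $\FPos{f(\vec s)}$ up to the renaming introduced by linearization, which does not affect unifiability, and symmetrically for $\vec t$. Head preservation turns the $n$-joinability diagram for $f(\vec s)\sigma,f(\vec t)\sigma$ into argument-wise diagrams $s_j\sigma\to^* w_j\leftarrow^* t_j\sigma$, whose terms, being subterms of $\cT_n$-terms, again lie in $\cT_n$; this gives $\Conv{\lin{s_j}}{\lin{t_j}}$. The equalizing $\Rsub$-derivations witnessing $\Equal{\lin{f(\vec s)}}$ and $\Equal{\lin{f(\vec t)}}$ proceed strictly below the variable positions of the linearized heads (Lemma~\ref{l:linredex}), hence never touch the skeleton, so they restrict to each argument and yield $\Equal{\lin{s_j}}$ and $\Equal{\lin{t_j}}$. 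Thus all five properties descend to every $s_j=t_j$.

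The remaining rules are comparatively routine. \Remove, \Choose and \Swap change neither the multiset of equations nor their properties, up to orientation. \Merge, and similarly \Merep, replaces $x=s\land x=t$ by $x=s\land s=t$, and the new equation inherits the five properties from transitivity of $n$-convertibility and concatenation of the two equalizations. The substitutional rules \Coalesce, \Replace and \Merep apply a substitution $\{x\mapsto w\}$, with $w$ a variable or the already-treated $s$, to the rest of the problem; here I would use Corollary~\ref{c:OFinst} to preserve the $\OF{\cdot}$ properties, $w$ being instantiated to an OF-term by the properties of the equation $x=w$, and adjust $\theta$ so that, conversion being closed under contexts and $x\theta$ being $n$-convertible to $w\theta$ with ranks kept bounded by rank non-increasingness, the $\Conv{\cdot}{\cdot}$ and $\Equal{\cdot}$ properties persist. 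I expect the genuine obstacle to be concentrated in \Decompose: establishing head preservation requires that overlap-freeness, equivalently the absence of any redex at the shared head, be maintained all along the joining derivation and not merely at its endpoints, and that every intermediate term stay within rank $n$; it is exactly here that the layeredness assumption \StOF and the rank non-increasing hypothesis, feeding Lemmas~\ref{l:sof} and~\ref{l:conv} and Corollary~\ref{c:OFinst}, are indispensable.
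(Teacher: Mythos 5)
Your proposal is correct and follows essentially the same route as the paper: rule-by-rule invariance of the five properties combined with termination of the unification rules, with \Conflict excluded and \Decompose justified by the observation that $\OF{u}$ together with Lemma~\ref{l:sof} forbids any rewrite at the shared head throughout the joining derivation, and the substitutional rules handled via Corollary~\ref{c:OFinst} and transitivity of $n$-convertibility. Your explicit ``head preservation'' discussion merely spells out what the paper compresses into ``hence no rewrite can take place at the root.''
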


\noindent
In this lemma and coming proof, we assume that linearizations are
propagated by the unification rules, implying in particular that
$\lin{u|_p}=\lin{u}|_p$. $P$ defines the initial linearization.

\begin{myproof}
We show that these five properties are invariant by the unification
rules. The claim follows since the unification rules
terminate. We use notations of Figure~\ref{fig:unif}.
\begin{itemize}
\item
\Remove, \Choose, \Swap are straightforward.

\item
\Decompose. By assumption,
$\Conv{\lin{f(\vect{s})}}{\lin{f(\vect{t})}}$, hence
$\lin{f(\vect{s})}\theta$ and $\lin{f(\vect{t})}\theta$ are joinable by
using terms of rank at most $n$, since $R$ is rank non-increasing.  By
assumption \OF{f(\vect{s})} and \OF{f(\vect{t})}, hence no rewrite
can take place at the root. The result follows.

\item
\Conflict. By the same token, $f=g$, a contradiction.
Thus \Conflict is impossible.

\item
\Coalesce. By assumption, $\Conv{x^k}{y^l}$,
$\Equal{x^k}$, $\Equal{y^l}$, and $(\forall u\!=\!v\in
P)$, $\Conv{\lin{u}}{\lin{v}}$, $\OF{u}$, $\Equal{\lin{u}}$, $\OF{v}$ 
and $\Equal{\lin{v}}$.  Putting these things together,
we get $\Conv{\lin{u}\{x^k\mapsto y^l\}}{\lin{v}\{x^k\mapsto y^l\}}$,
hence $\Conv{\lin{u\{x\mapsto y\}}}{\lin{v\{x\mapsto y\}}}$.
Similarly, properties $\Equal{\lin{u\{x\mapsto y\}}}$ and
$\Equal{\lin{v\{x\mapsto y\}}}$ hold. Property $\OF{u}$ is of course
preserved by variable renaming for any $u$.

\item
\Merge. Assume $\Conv{x^k}{\lin{s}}$, $\Conv{x^l}{\lin{t}}$, $\OF{s}$,
$\Equal{\lin{s}}$, $\Equal{x^k}$, $\OF{t}$, $\Equal{{\lin{t}}}$ and 
$\Equal{x^l}$.  $\Conv{\lin{s}}{\lin{t}}$ follows from
$\Conv{x^k}{\lin{s}}$, $\Conv{x^l}{\lin{t}}$, $\Equal{x^k}$ and 
$\Equal{x^l}$. The other properties follow similarly.

\item
\Replace.
The proof is similar for the first 3 properties. Further, OF is
preserved by replacement by Corollary~\ref{c:OFinst}.

\item
\Merep. Similar to \Merge.
\qed
\end{itemize}
\end{myproof}

\begin{example} [NKH]
Let $P= f(x,x)=f(y,c(y))$. Then $P \ra_{\Decompose} x=y\land
x=c(y)\ra_{\Replace} c(y)=y\land x=c(y) \ra_{\Swap} y=c(y)\land
x=c(y)$. Successive linearizations yield $f(x^1,x^2) \!=\!
f(y^1,c(y^2))$, $x^1 \!=\! y^1\land x^2 \!=\! c(y^2)$, $c(y^2) \!=\!
y^1\land x^2 \!=\! c(y^2)$ and $y^1 \!=\! c(y^2)\land x^2 \!=\!
c(y^2)$. The announced properties of the solved form can be easily
verified.
\end{example}

\begin{corollary}
\label{c:cp}
Let $l\ra r, g\ra d \!\in\! R$ and $p\in\FPos{l}$ such that
$\Var{l}\!\cap\!\Var{g}\!=\!\emptyset$, and
$\lin{l}|_p\theta=\lin{g}\theta$ are terms in $\cT_{n+1}$. Then,
unification of $l|_p\!=\!g$ succeeds, returning a solved form $S$
s.t., for each $z=s\in S, \Conv{\lin{z}}{\lin{s}}$, $\OF{s}$,
$\Equal{\lin{s}}$ for all $\sigma$ satisfying
$(\lin{l}\theta\lrdertwo{(>\FPos{l})}{}l\sigma) \land
(\lin{g}\theta\lrdertwo{(>\FPos{g})}{}g\sigma)$, and further,
$\SuOF{l|_p\eta_S}\!\land\!\SuOF{g\eta_S}$.
\end{corollary}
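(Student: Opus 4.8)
The plan is to derive Corollary~\ref{c:cp} as a direct application of Lemma~\ref{l:unifinvar} to the specific unification problem $P = (l|_p = g)$, after first verifying that the five hypotheses of that lemma hold for the single equation of $P$. First I would observe that the hypothesis $\lin{l}|_p\theta = \lin{g}\theta$ are terms in $\cT_{n+1}$ means both sides have rank at most $n+1$, and since they are syntactically equal they are trivially $(n+1)$-convertible; this gives $\Conv{\lin{l|_p}}{\lin{g}}$ with the convention $\lin{l|_p} = \lin{l}|_p$ announced before Lemma~\ref{l:unifinvar}. The properties $\OF{l|_p}$ and $\OF{g}$ are exactly what the layering assumption \StOF delivers: since $l \ra r$ and $g \ra d$ overlap linearly at $p$ (witnessed by $\theta$), \StOF guarantees $\SuOF{l|_p} \land \SuOF{g}$, and I would check that $\SuOF{u}$ together with the fact that the overlap position $p$ itself carries an overlap yields $\OF{u}$ at the relevant subterms, feeding the overlap-freeness needed by the lemma.

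Next I would establish the equalization hypotheses $\Equal{\lin{l|_p}}$ and $\Equal{\lin{g}}$. Here the quantified substitution $\sigma$ in the statement is exactly an equalizer in the sense of Lemma~\ref{l:linredex}: the conditions $\lin{l}\theta \lrdertwo{(>\FPos{l})}{} l\sigma$ and $\lin{g}\theta \lrdertwo{(>\FPos{g})}{} g\sigma$ say that $\theta$ rewrites below the non-variable positions to an instance $l\sigma$ (resp.\ $g\sigma$), which is precisely $\lin{l}\theta \lrdertwo{}{\Rsub} l\sigma$ restricted to the subterm at $p$, so $\Equal{\lin{l}|_p}$ and $\Equal{\lin{g}}$ hold with terms of rank at most $n$ (rank non-increasingness keeps us within $\cT_{n+1}$, and the relevant subterms sit at rank $\leq n$ below the top redex layer). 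With all five properties of Lemma~\ref{l:unifinvar} verified for $P$ and the assumption that $n$-convertible terms are joinable, the lemma yields that unification of $l|_p = g$ succeeds and returns a solved form $S$ all of whose equations $z = s$ satisfy $\Conv{\lin{z}}{\lin{s}}$, $\OF{s}$, and $\Equal{\lin{s}}$, which is the bulk of the conclusion.

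The remaining, and genuinely new, part of the conclusion is the final clause $\SuOF{l|_p\eta_S} \land \SuOF{g\eta_S}$, where $\eta_S$ is the canonical substitution of the solved form $S$ from Definition~\ref{l:ecu}. The plan here is to use the closure property of overlap-freeness under OF-substitutions, Corollary~\ref{c:OFinst}. I would argue that $\eta_S$ is an OF-substitution: its range values $\vect{u}$ are terms appearing on righthand sides of the finite equations of $S$, and by the invariant just established each such $s$ satisfies $\OF{s}$, so the canonical substitution maps variables to OF-terms. Then applying $\eta_S$ to the strict subterms of $l|_p$ and $g$ preserves their overlap-freeness by Corollary~\ref{c:OFinst}, giving $\SuOF{l|_p\eta_S}$ and $\SuOF{g\eta_S}$ from $\SuOF{l|_p}$ and $\SuOF{g}$.

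I expect the main obstacle to be the bookkeeping around linearization and the precise interaction between the rank bound $n+1$ on the overlap and the rank bound $n$ demanded inside \Equal{}. One must be careful that the equalization steps stay at rank at most $n$ even though the overlap itself lives in $\cT_{n+1}$: the point is that equalization happens strictly below the top linearized-redex layer, so the substituted subterms have rank at most $n$, but making this rigorous requires invoking Definition~\ref{d:rank} and rank non-increasingness at exactly the right positions. A second delicate point is confirming that $\eta_S$ genuinely qualifies as an OF-substitution and that its domain does not interfere with the subterm positions at which \SuOF{} is evaluated, so that Corollary~\ref{c:OFinst} applies cleanly to every strict subterm rather than only generically.
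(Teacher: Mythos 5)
Your overall route---feeding the equation into Lemma~\ref{l:unifinvar} and then using Corollary~\ref{c:OFinst} for the final clause $\SuOF{l|_p\eta_S}\land\SuOF{g\eta_S}$---is the paper's route, and your handling of the convertibility, equalization and instantiation parts is essentially sound. But there is a genuine gap in how you launch the lemma: you propose to verify its five hypotheses \emph{for the single equation} $l|_p=g$, and in particular to extract $\OF{l|_p}$ and $\OF{g}$ from \StOF. Both properties are in fact \emph{false} here. The predicate $\OF{v}$ quantifies over all positions $o\in\FPos{v}$, including the root, and the very hypothesis $\lin{l}|_p\theta=\lin{g}\theta$ exhibits a linearized overlap of $g$ on $l|_p$ at the root (and $\lin{g}$ trivially overlaps a renamed copy of itself), so $\OF{l|_p}$ and $\OF{g}$ cannot hold. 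What \StOF actually delivers is $\SuOF{l|_p}\land\SuOF{g}$, i.e.\ overlap-freeness of the \emph{strict} subterms only; your sentence trying to upgrade this to ``$\OF{u}$ at the relevant subterms'' conflates the two predicates.

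This is precisely why Lemma~\ref{l:unifinvar} is stated for a system ``obtained by decomposition of a unification problem $P$'', and why the paper's proof begins with ``Unification applies first \Decompose'': after one \Decompose step the equations relate immediate, hence strict, subterms of $l|_p$ and $g$, which are OF-terms by $\SuOF{l|_p}\land\SuOF{g}$, which have rank at most $n$ (the overlap is a linearized redex of rank $n+1$, so Corollary~\ref{c:rank} pushes its arguments one rank down), and which inherit the convertibility and equalization properties from the equalization hypotheses on $\theta$ and $\sigma$. With that initial decomposition step inserted, your argument goes through and coincides with the paper's; as written, the invocation of the lemma on $l|_p=g$ itself fails at the OF hypothesis.
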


\begin{proof}
Unification applies first \Decompose. Conclude by
Lemmas~\ref{l:unifinvar} and Corollary~\ref{c:OFinst}.
\end{proof}

\begin{corollary}
\label{c:rank}
Assume $t=\lin{l}\sigma$ for some $l\ra r\in R$. Then, $\rank{t} = 1
+\rank{\sigma}$.
\end{corollary}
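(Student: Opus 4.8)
The plan is to read off the value of $\rank{t}$ directly from Definition~\ref{d:rank} and then show that the maximum occurring there is attained by the given $\sigma$. Since $t=\lin{l}\sigma$ is a linearized redex, Definition~\ref{d:rank} gives $\rank{t}=1+\max\{\rank{\sigma'} : (\exists l'\ra r'\in R)\; t=\lin{l'}\sigma'\}$, and the chosen decomposition is one of the candidates, so immediately $\rank{t}\geq 1+\rank{\sigma}$. It remains to prove the reverse inequality, namely that every decomposition $t=\lin{l'}\sigma'$ satisfies $\rank{\sigma'}\leq\rank{\sigma}$; by symmetry this forces $\rank{\sigma'}=\rank{\sigma}$ for all decompositions, hence $\rank{t}=1+\rank{\sigma}$.

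The crucial input is the layering assumption \StOF, which I first use to establish a uniform fact: every lefthand side $l$ satisfies $\SuOF{l}$. Indeed, $\lin{l}$ unifies with a renamed copy of itself at the root, so instantiating \StOF at position $p=\rootp$ with $g$ a fresh copy of $l$ yields $\SuOF{l|_\rootp}=\SuOF{l}$. Now fix any decomposition $t=\lin{l'}\sigma'$. Because $\SuOF{l'}$ holds, $\OF{l'|_q}$ holds for every nonvariable position $q\neq\rootp$ of $\lin{l'}$; hence no linearized redex of $t$ other than the topmost one can sit at such a $q$, since a redex $t|_q=\lin{l'}|_q\sigma'=\lin{g}\rho$ would exhibit a common instance of $\lin{l'}|_q$ and some lefthand side $\lin{g}$, contradicting $\OF{l'|_q}$. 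Therefore all redexes of $t$ below the root occur at or below the variable positions of $\lin{l'}$, that is, strictly inside the images of $\sigma'$; and the same holds for the given decomposition with $\lin{l}$.

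Next I would isolate two auxiliary facts, both proved by induction on Definition~\ref{d:rank} using the observation just made. First, rank is monotone under taking subterms: $\rank{w|_o}\leq\rank{w}$. Second, for an overlap-free term $v$ and any substitution $\tau$ one has $\rank{v\tau}=\max\{\rank{\tau(x)} : x\in\Var{\lin{v}}\}$ (with $\max\emptyset=0$), because the nonvariable structure of $v$ contributes no redex by $\OF{v}$. With these in hand I treat the upper bound. If $\lin{l}$ has no variables then $\rank{\sigma}=0\leq\rank{\sigma'}$ trivially; otherwise pick a variable position $p$ of $\lin{l}$ attaining $\rank{\sigma}=\rank{t|_p}$, noting $p\neq\rootp$ since $l$ is not a variable. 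Walking down $\lin{l'}$ from the root, $p$ falls into one of two cases: either $p$ lies at or below a variable position of $\lin{l'}$, so $t|_p$ is a subterm of some image $\sigma'(x')$ and monotonicity gives $\rank{t|_p}\leq\rank{\sigma'}$; or $p\in\FPos{\lin{l'}}\setminus\{\rootp\}$, so $t|_p=\lin{l'}|_p\sigma'$ is an overlap-free instance by $\SuOF{l'}$ and the second auxiliary fact gives $\rank{t|_p}=\max\{\rank{\sigma'(x'')} : x''\in\Var{\lin{l'}|_p}\}\leq\rank{\sigma'}$. Either way $\rank{\sigma}\leq\rank{\sigma'}$, and the symmetric argument closes the proof.

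The main obstacle I anticipate lies in the bookkeeping of the two auxiliary facts and the case split, together with the care needed to license the root overlap: I must check that a common instance $t$ of $\lin{l}$ and $\lin{l'}$ really supplies a unifier over $\TFX$ (infinite rational terms are admitted there, so a common instance does give a unifier, obtained by uniting the two matching substitutions on disjoint variables), and that $\SuOF{\cdot}$ genuinely forbids internal \emph{redexes} rather than merely internal overlaps. Once it is established that, for both decompositions, every redex below the root sits inside the substitution images, the equality $\rank{\sigma}=\rank{\sigma'}$ is forced because both quantities compute the same intrinsic datum of $t$, namely the largest rank found strictly below its root.
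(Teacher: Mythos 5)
Your argument rests on the claim that \emph{every} lefthand side $l$ of $R$ satisfies $\SuOF{l}$, obtained by instantiating \StOF at $p=\rootp$ with $g\ra d$ a fresh renamed copy of $l\ra r$ itself. That instantiation is not licensed, and the claim is false for layered systems. The quantification in \StOF ranges over rules $g\ra d\in R$ with $\Var{\lin{l}}\cap\Var{\lin{g}}=\emptyset$, which by the paper's convention ($\Var{\lin{s}}\cap\Var{\lin{t}}=\emptyset$ iff $\Var{s}\cap\Var{t}=\emptyset$) excludes taking $g\ra d$ to be $l\ra r$, and $R$ is not closed under renaming. That this exclusion is intended is forced by the paper's own example: $\{h(f(x,y))\ra a,\ f(x,c(x))\ra b\}$ is declared layered, yet $\SuOF{h(f(x,y))}$ fails because $\OF{f(x,y)}$ fails ($\lin{f(x,y)}$ unifies with the linearized, renamed-apart second lefthand side $\lin{f(z,c(z))}$). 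Under your reading, the root self-overlap of $h(f(x,y))$ would force $\SuOF{h(f(x,y))}$ and the system would not be layered. Consequently your ``uniform fact'' is false, and with it the observation that every redex of $t=\lin{l'}\sigma'$ below the root sits inside the images of $\sigma'$: in the example, $h(f(a,c(a)))$ has a redex at the non-variable position $1$ of $\lin{h(f(x,y))}$. This breaks both your proof of subterm-monotonicity of the rank and the second branch of your final case split, which invoke $\SuOF{l'}$ (respectively $\SuOF{l''}$ for arbitrary lefthand sides of redexes met during the induction).

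The statement you are proving survives because in the only non-trivial situation two \emph{distinct} rules $l\ra r$ and $l'\ra r'$ both match $t$ at the root; \StOF applied to that genuine root overlap (the union of the two matchers is a unifier over $\TFX$) does yield $\SuOF{l}\land\SuOF{l'}$, which is what your case split actually needs, while the one-rule case is trivial. Subterm-monotonicity would likewise have to be re-proved by applying \StOF at the position of each \emph{inner} overlap encountered during the descent (yielding $\SuOF{l''|_i}\land\SuOF{g}$ for the two lefthand sides meeting there), rather than by assuming $\SuOF{l''}$ outright. Note that the paper takes a different route altogether: it simultaneously unifies all linearized lefthand sides matching $t$, factors every matcher as $\theta\gamma$ through the mgu $\theta$, and uses Lemma~\ref{l:unifinvar} to conclude that $\theta$ is OF at its non-variable positions, so that every $\rank{\sigma_i}$ equals $\rank{\gamma}$.
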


\begin{proof}
Let $t=\lin{l_i}\sigma_i=\lin{l_i}\theta\gamma$ (note that $\gamma$
does not depend on $i$), where $\theta=mgu(=_{i}\lin{l_i})$. Then,
$\rank{t} = 1 + max_i\{\rank{\sigma_i}\}=1+
max_i\{\rank{\theta\gamma}\}= 1+ \rank{\gamma}= 1+\rank{\sigma_i}$
since $\theta$ satisfies OF at all non-variable positions by
Lemma~\ref{l:unifinvar}.
\end{proof}

\begin{example} [NKH]
Consider $f(c(g),c(g))$ of rank 2, using either linearized lefthand
side $f(x^1,x^2)$ or $f(y^1, c(y^2))$ to match
$f(c(g),c(g))$. Corresponding substitutions have rank 1.
\end{example}

A major consequence is that the preparatory phase of sub-rewriting
operates on terms of a strictly smaller rank. This would not be true
anymore, of course, with a conversion-based preparatory phase. More
generally, we can also show that the rank of terms does not increase
--but may remain stable-- when taking a subterm, a property which is
not true of non-layered systems. Consider the system $\{f(g(h(x))) \ra
x, g(x)\ra x, h(x)\ra x\}$. The redex $f(g(h(a)))$ has rank 1 with our
definition, but its subterm $g(h(a))$ has rank 2.

\subsection{Testing confluence of layered systems via their cyclic critical pairs}
\label{s:ctcp}

Since $R$ is rank non-increasing we shall prove confluence by
induction on the rank of terms.  Since rewriting is rank
non-increasing, the set of $\cT_n$-conversions is closed under diagram
rewriting, hence allowing us to use Corollary~\ref{c:confluence}.
This is why we adopted this restricted, but complete, form of
decreasing diagram rather than the more general form described
in~\cite{DBLP:conf/rta/Oostrom08}.

\begin{definition}[Cyclic critical pairs]
\label{d:ccp}
Given a layered rewrite system $R$, let $l \ra r, g\ra d\in R$ and
$p\in\FPos{l}$ such that $\Var{l}\cap\Var{g}=\emptyset$, and $l|_p=g$
is unifiable with canonical cyclic unifier $\langle
\eta_S=\{\vect{x}\mapsto\vect{u}\},
\R_S=\{\vect{y}\ra\vect{v}\}\rangle$.  Then, $r\eta_S \rlps{}{R}
l\eta_S \cc{R_S\eta_S} l[g]_p\eta_S \lrps{}{R} l[d]_p\eta_S$ is a
\emph{cyclic critical peak}, and $\langle r\eta_S,l[d]_p\eta_S
\rangle$ is a \emph{cyclic critical pair}, which is said to be
\emph{realizable} by the substitution $\theta$ iff $(\forall y\ra v\in
R_S)\, y\theta \lrdertwo{}{R} \,\rldertwo{}{R} v\theta$.
\end{definition}

The relationship between critical peaks and realizable cyclic critical
pairs, usually called critical pair lemma, is more complex than usual:

\begin{lemma}[Cyclic critical pair lemma]
\label{l:equalization}
\mbox{Let $l \lrps{}{} r, g\lrps{}{} d\!\in\! R$ such that
  $\Var{l}\cap\Var{g}\!=\!\emptyset$.}  Let $r\sigma
\rlps{\rootp}{l\ra r} l\sigma \rlder{(>\FPos{l})}{} \lin{l}\theta =
\lin{l}\theta[\lin{g}\theta]_p \lrder{(>p\cdot\FPos{g})}{}
\lin{l}\theta[g\sigma]_p \lrps{p}{g\ra d} \lin{l}\theta[d\sigma]_p$ be
a sub-rewriting local peak in $\cT_{n+1}$, satisfying $p\in\FPos{l}$
and $\Var{\lin{l}\theta}\cap\Var{l,g}=\emptyset$. Assume further that
$R$ is Church-Rosser on the set $\cT_n$. Then, there exists a cyclic
solution $\langle\gamma, R_S\rangle$ such that $S$ is a solved form of
the unification problem $l|_p=g$, $\gamma=\eta_S\rho$ for some $\rho$
of domain included in $\Var{l,g}$, $\sigma\lrdertwo{}{R}\gamma$, and
$R_S$ is realizable by $\gamma$.
\end{lemma}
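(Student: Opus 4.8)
The plan is to realize the abstract cyclic critical peak of Definition~\ref{d:ccp} as a reduct of the given concrete sub-rewriting peak, using the hypothesis that $R$ is Church-Rosser on $\cT_n$ to upgrade the convertibilities produced by unification into the directed reductions required by the conclusion. First I would apply Corollary~\ref{c:cp} to the overlap $\lin{l}|_p\theta=\lin{g}\theta$, which is a subterm of $\lin{l}\theta\in\cT_{n+1}$ and hence lies in $\cT_{n+1}$: unification of $l|_p=g$ then succeeds and returns a solved form $S=\vect{x}=\vect{u}\land\vect{y}=\vect{v}$, from which I extract the canonical cyclic unifier $\langle\eta_S=\{\vect{x}\mapsto\vect{u}\},R_S=\{\vect{y}\ra\vect{v}\}\rangle$ together with the invariants $\Conv{\lin{z}}{\lin{s}}$, $\OF{s}$ and $\Equal{\lin{s}}$ for every equation $z=s\in S$ (our $\sigma$ satisfies the required equalization conditions by Lemma~\ref{l:linredex}). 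By Corollary~\ref{c:rank} the equalizations $\lin{l}\theta\lrder{(\gtordp\FPos{l})}l\sigma$ and $\lin{g}\theta\lrder{(\gtordp p\cdot\FPos{g})}g\sigma$ take place on substitution values of rank at most $n$, so every term appearing in the rest of the argument lives in $\cT_n$ and the Church-Rosser hypothesis applies to it.

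Next I would fix the witnessing substitution $\rho$, hence $\gamma=\eta_S\rho$. For a single equation $z=s\in S$, combining $\Conv{\lin{z}}{\lin{s}}$, $\Equal{\lin{z}}$ and $\Equal{\lin{s}}$ through Lemma~\ref{l:conv} shows that $\sigma(z)$ and $s\sigma$ are $n$-convertible, hence $n$-joinable by the hypothesis: $\sigma(z)\lrdertwo{}{R}w\rldertwo{}{R}s\sigma$ for some $w\in\cT_n$. The decisive observation is that $s$ is an $\OF{s}$-term, so by Lemma~\ref{l:sof} together with $R\subseteq\Rsub$ the reduction from $s\sigma$ to $w$ can only fire strictly below the variable positions of $s$; therefore $w=s\sigma'$ for some $\sigma'$ with $\sigma(z')\lrdertwo{}{R}\sigma'(z')$ for each $z'\in\Var{s}$, and thus $\sigma(z)\lrdertwo{}{R}s\sigma'$. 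Performing this for all (finitely many) equations of $S$ produces, for each parameter $z'\in\cP$, finitely many $R$-reducts of $\sigma(z')$; I define $\rho$ on $\cP$ to be a common reduct of all of them, which exists by Church-Rosser on $\cT_n$, and I set $\rho(y)=\sigma(y)$ on $\vect{y}$. By construction $\Dom{\rho}\subseteq\cP\cup\vect{y}\subseteq\Var{l,g}$.

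It then remains to check the three stated properties of $\langle\gamma,R_S\rangle$. For $\sigma\lrdertwo{}{R}\gamma$: on parameters and cyclic variables it is immediate from $\sigma(z')\lrdertwo{}{R}\rho(z')$; on a finite variable $x$ with $x=u\in S$, where $\Var{u}\subseteq\cP$, I get $\sigma(x)\lrdertwo{}{R}u\sigma'\lrdertwo{}{R}u\rho=x\gamma$, the last reduction pushing the parameter instances from the equation-specific $\sigma'$ down to the common reduct $\rho$. For realizability of $R_S$ by $\gamma$: given $y\ra v\in R_S$, the convertibility argument above makes $\sigma(y)$ and $v\sigma$ $n$-joinable, while $v\sigma\lrdertwo{}{R}v\rho=v\gamma$ (here $v\eta_S=v$ since $\Dom{\eta_S}=\vect{x}$ is disjoint from $\Var{v}\subseteq\cP\cup\vect{y}$); a final application of Church-Rosser on $\cT_n$ then joins $\sigma(y)=y\gamma$ with $v\gamma$, which is exactly the realizability condition. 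Finally $\gamma=\eta_S\rho$ holds by definition and $S$ is a solved form of $l|_p=g$ by Corollary~\ref{c:cp}, so $\langle\gamma,R_S\rangle$ is the required cyclic solution.

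The hard part is the second paragraph, namely upgrading the joinability supplied by unification and the Church-Rosser hypothesis to the \emph{directed} reduction $\sigma\lrdertwo{}{R}\gamma$ demanded by the statement. Confluence on $\cT_n$ alone only yields a common reduct $w$ of $\sigma(z)$ and $s\sigma$, from which $\sigma(z)\lrdertwo{}{R}s\sigma'$ does not follow in general; what rescues the argument is that layeredness forces the righthand sides $s$ of the solved form to be overlap-free (Corollary~\ref{c:cp}), so that $s\sigma$ can only be rewritten inside its variables and the common reduct is necessarily of the shape $s\sigma'$. Making this structural constraint cohere with the simultaneous choice of a single $\rho$ on the shared parameters, so that all finite equations and all cyclic equations are satisfied at once, is the delicate bookkeeping at the core of the proof.
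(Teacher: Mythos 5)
Your proof is correct and follows essentially the same route as the paper's: obtain the solved form and its invariants from Corollary~\ref{c:cp}, use $\OF{s}$ to force the common reduct of $z\sigma$ and $s\sigma$ into the shape $s\sigma'$, merge the per-equation parameter values into a single one by a second appeal to Church--Rosser on $\cT_n$, and set $\gamma(y)=y\sigma$ on cyclic variables. The only (easily repaired) omission is that you restrict $\Dom{\rho}$ to $\cP\cup\vect{y}$, whereas the paper also sets $\gamma(z)=z\sigma$ on the context variables in $\Var{l,g}\setminus\Var{l|_p,g}$ so that $\sigma\lrdertwo{}{R}\gamma$ holds on all of $\Var{l,g}$, as needed when the diagram is lifted in Theorem~\ref{t:OLS}.
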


\begin{proof}
Corollary~\ref{c:cp} asserts the existence of a solved form
$S=\vect{x}=\vect{u}\land\vect{y}=\vect{v}$ of the problem $l|_p=g$.
But $\langle\sigma,R_S\rangle$ may not be a cyclic solution.  We shall
therefore construct a new substitution $\gamma$ such that $\sigma
\lrdertwo{}{\Rsub} \gamma$ and $\langle\gamma,R_S\rangle$ is a cyclic
solution of the problem, obtained as an instance by some substitution
$\rho$ of the most general cyclic unifier $\langle\eta_S, R_S\rangle$
by Theorem~\ref{t:mgcu}.

The construction of $\gamma$ has two steps. The first aims at forcing
the equality constraints given by $S$. This step will result in each
parameter having possibly many different values. The role of the
second step will be to construct a single value for each parameter.
\hide{
We then
start equalizing independently the equations $x\!=\!s \!\in\!S$.
Since $\Conv{x^j}{\lin{s}}$, $\Equal{x^j}$ and $\Equal{\lin{s}}$,
$x\sigma$ and $s\sigma$ are $n$-convertible by Lemma~\ref{l:conv}. By
assumption, $x\sigma$ and $s\sigma$ are joinable, hence there exists a
term $t_x^s$ such that $x\sigma \lrdertwo{}{R} t_x^s \rldertwo{}{R}
s\sigma$. Since $\OF{s}$ by Corollary~\ref{c:cp}, the derivation
from $s\sigma$ to $t_x^s$ must occur at positions below
$\FPos{s}$. For $p\in\cP$, consider the set $G_p = \{ (t_x^s)|_q :
x=s\in S \mbox{ and } s|_q=p\}$, whose elements are terms rewritten
from some $p\theta$, hence are $n$-convertible to each other since
rewriting is rank non-increasing. By the Church-Rosser assumption,
they can all be rewritten to a same term $t_p$.  We now define
$\gamma$: 

(i) parameters. Given $p\in\cP$, we define $\gamma(p)\!=\!t_p$.  By
construction, $p\sigma\lrdertwo{}{R} t_p=p\gamma$.

(ii) finite variables. Given $x=u\in \vect{x}=\vect{u}$, let
$\gamma(x) = u\gamma_{|\cP}$, thus $x\gamma=u\gamma$. By construction,
$x\sigma \lrdertwo{}{R} u\tau_{x}^{u} \lrdertwo{}{R} u\gamma$, hence
$x\sigma\lrdertwo{}{R} x\gamma$.

(iii) cyclic variables.  Given $y=v\in \vect{x}=\vect{u}$, let
$\gamma(y) = v\gamma_{|\cP}$, making $y\sigma\lrdertwo{}{R} y\gamma$ trivial.

We proceed to show that $\gamma$ is a cyclic unifier of $S$, hence of
$l|_p=g$ by lemma~\ref{l:pocu}.  Let us first consider
$x=u\in\vect{x}=\vect{u}$. Since $x\not\in\Var{u}$, 
$x\gamma=u\gamma$.
By our assumptions, $\Var{r/d}\subseteq\Var{l/g}$,
$\Ran{\gamma}\subseteq\Var{u}$, $\Var{u}\cap\Var{l,g}=\emptyset$ and
$\Var{S} \subseteq\Var{l,g}$, hence $\gamma$ is an instance of
$\eta_S$ by $\gamma_{|\lnot\vect{x}}$. By definition, $\langle \eta_S
, \{\vect{y}\ra\vect{v}\}\rangle$ is an $S$-based cyclic unifier of
$S$.  Hence $\gamma$ is an $S$-based cyclic solution of $S$ satisfying
our statement.

We end up the proof by noting that $\gamma$ is a realizer of $R_S$.
}

We start equalizing independently equations $z\!=\!s\!\in\!S$.  Since
$\Equal{z^j}$, $\Equal{\lin{s}}$ and $\Conv{z^j}{\lin{s}}$, $z\sigma$
and $s\sigma$ are $n$-convertible by Lemma~\ref{l:conv}.  By
assumption, $z\sigma$ and $s\sigma$ are joinable, hence there exists a
term $t_z^s$ such that $z\sigma \lrdertwo{}{R} t_z^s \rldertwo{}{R}
s\sigma$. Since $\OF{s}$ by Corollary~\ref{c:cp}, the derivation from
$s\sigma$ to $t_z^s$ must occur at positions below
$\FPos{s}$. Maintaining equalities in $s\sigma$ between different
occurrences of each variable in $\Var{s}$, we get $t_z^s=s\tau_z^s$
for some $\tau_z^s$.  For each parameter $p$,
$p\sigma\lrdertwo{}{R}p\tau_z^s$, hence the elements of the non-empty
set $\{ p\tau_z^s : p\in\Var{s}\mbox{ for some } z=s\in S\}$ are
$n$-convertible thanks to rank non-increasingness. By our Church-Rosser
assumption, they can all be rewritten to a same term $t_p$.  We now
define $\gamma$:

(i) parameters. Given $p\in\cP$, we define $\gamma(p)\!=\!t_p$.  By
construction, $p\sigma\lrdertwo{}{R} t_p=p\gamma$.

(ii) finite variables. Given $x=u\in \vect{x}=\vect{u}$, let
$\gamma(x) = u\gamma_{|\cP}$, thus $x\gamma=u\gamma$. By construction,
$x\sigma \lrdertwo{}{R} u\tau_{x}^{u} \lrdertwo{}{R} u\gamma$, hence
$x\sigma\lrdertwo{}{R} x\gamma$.

(iii) cyclic variables.  Given $y=v\in \vect{x}=\vect{u}$, let
$\gamma(y) = y\sigma$, making $y\sigma\lrdertwo{}{R} y\gamma$ trivial.

(iv) variables in $\Var{l,g}\setminus\Var{l|_p,g}$, that is, those variables from
the context $l[\cdot]_p$ which do not belong to the unification problem 
$l|_p=g$, hence to the solved form $S$. Given
$z\in\Var{l,g}\setminus\Var{l|_p,g}$, let $\gamma(z)=z\sigma$, making
$z\sigma\lrdertwo{}{R} z\gamma$ trivial.

Therefore $\sigma\lrdertwo{}{R}\gamma$. We proceed to show
$\langle\gamma, R_S\rangle$ is a cyclic solution of $l|_p=g$. Take
$\rho=\gamma|_{\lnot\vect{x}}$. It is routine to see
$\gamma=\eta_S\rho$, and to check that $\langle\eta_S, R_S\rangle$ is
a cyclic unifier of $S$ by Definition~\ref{d:cu}, hence of $l|_p=g$ by
Lemma~\ref{l:pocu}. Hence the statement.

We end up the proof by noting that $\gamma$ is a realizer of $R_S$.
\end{proof}

In case of NKH, the lemma is straightforward since solved forms have no parameters.


Our proof strategy for proving confluence of layered systems is as
follows: assuming that $n$-convertible terms are joinable, we show
that $(n+1)$-convertible terms are $(n+1)$-joinable by exhibiting
appropriate decreasing diagrams for all their local peaks. To this
end, we need to define a labelling schema for sub-rewriting.  Assuming
that rules have an integer index, different rules having possibly the
same index, a step $u \dlrps{p}{\Rsub} v$ with the rule $l_i\ra r_i$
is labelled by the pair $\langle \rank{u|_p}, i\rangle$. Pairs are
compared in the order $\geordl=(\geq_N, \geq_N)_{lex}$ whose strict
part is well-founded.  Indexes give more flexibility (shared indexes
give even more) in finding decreasing diagrams for critical pairs,
this is their sole use.

\begin{definition}
\label{d:cpcheck}
Let $l\lrps{}{i}r, g\lrps{}{j}d\in R$ and $p\in\FPos{l}$ such that
$l|_p=g$ has a solved form $S$. Then, the cyclic critical pair
$\langle r\eta_S,l[d]_p\eta_S \rangle$ has a \emph{cyclic-joinable
  decreasing diagram} if $r\eta_S \dlrder{\langle 1, I\rangle}{R} s
\cc{R_S\eta_S} t \drlder{\langle 1, J\rangle}{R} l[d]_p\eta_S$, whose
sequences of indexes $I$ and $J$ satisfy the decreasing diagram
condition, with the additional condition, in case
$\Var{l[\cdot]_p}\neq\emptyset$, that all steps have a rule index
$k<i$.
\end{definition}

By Corollary~\ref{c:cp}, the ranks of $l\eta_S$ and $l[g]_p{\eta_S}$
are 1. Thanks to rank non-increasingness and Definition~\ref{d:rank},
the cyclic-joinable decreasing diagram --but the congruence closure
part-- is made of terms of rank 1 except possibly $s$ and $t$ which
may have rank $0$. It follows that all redexes rewritten in the
diagram have rank 1. The decreasing diagram condition is therefore
ensured by the rule indexes, which justifies our formulation.

\hide{
By Corollary~\ref{c:cp}, the rank of $l\eta_S$ and $l[g]_p{\eta_S}$ is
1.  Assuming without loss of generality that $\cc{R_S\eta_S}$ is a
joinability diagram, it follows that the cyclic-joinable decreasing
diagram --but the congruence closure part-- is made of terms of rank 1
except possibly $s$ and $t$ which may have rank $0$. It follows that
all terms in the diagram have rank 1. The decreasing diagram
condition is therefore ensured by the rule indexes, which justifies
our formulation.
}

Note further that the condition $\Var{l[\cdot]_p}=\emptyset$ is
automatically satisfied when $p=\rootp$, hence no additional condition
is needed in case of a root overlap. In case where
$\Var{l[\cdot]_p}\neq\emptyset$, implying a non-root overlap, the
additional condition aims at ensuring that the decreasing diagram is
stable under substitution. It implies in particular that there exists
no $i$-facing step. This may look restrictive, and indeed, we are able
to prove a slightly better condition: (i) there exists no $i$-facing
step, and (ii) each step $u\ra^q_k v$ using rule $k$ at position
$q$ satisfies $k<i$ or $\Var{u|_q}\subseteq\Var{g\eta_S}$. We will restrict
ourselves here to the simpler condition which yields a less involved
confluence proof.

We can now state and prove our main result:

\begin{theorem}
\label{t:OLS}
Rank non-increasing layered systems are confluent provided their
realizable cyclic critical pairs have cyclic-joinable decreasing diagrams.
\end{theorem}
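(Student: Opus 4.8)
The plan is to establish the Church--Rosser property of sub-rewriting $\Rsub$ and transfer it to $R$. Since $R\subseteq\Rsub\subseteq R^*$, the relations $\to_R^*$ and $\to_{\Rsub}^*$ and their conversions coincide, so it suffices to treat $\Rsub$. First I would label each step $u\lrps{p}{\Rsub}v$ fired with rule $l_i\ra r_i$ by the pair $\langle\rank{u|_p},i\rangle$, ordered by the lexicographic order $\geordl=(\geq_N,\geq_N)_{lex}$, and invoke Corollary~\ref{c:confluence}. Because $R$ is rank non-increasing, every $\Rsub$-conversion issuing from a term of $\cT_{n+1}$ stays inside $\cT_{n+1}$, so the set of $\cT_{n+1}$-conversions is closed under diagram rewriting. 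I would then argue by induction on the rank $n$, the base case being trivial since rank-$0$ terms carry no linearized redex, hence are $\Rsub$-normal by Lemma~\ref{l:linredex}. Assuming $R$ is Church--Rosser on $\cT_n$ (equivalently, $n$-convertible terms are $n$-joinable), I exhibit a decreasing diagram inside $\cT_{n+1}$ for every local $\Rsub$-peak whose top lies in $\cT_{n+1}$; Corollary~\ref{c:confluence} then yields Church--Rosser on $\cT_{n+1}$ and closes the induction.

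Following Huet, I split each local peak $v\rlps{p,l}{}u\lrps{q,m}{}w$ into the disjoint, ancestor, and critical cases. In the disjoint case $p\#q$, the two sub-rewriting steps act on disjoint subterms --- their equalization phases taking place at positions $\geordp p$ and $\geordp q$ respectively --- so they commute; since $v|_q=u|_q$ and $w|_p=u|_p$, the resulting valley has its two facing steps labelled exactly $m$ and $l$ and is decreasing. In the ancestor case $q\gtordp p\cdot\FPos{l}$, the lower redex lies strictly below a variable of the rank-$(n{+}1)$ redex $u|_p$, so by Corollary~\ref{c:rank} and rank non-increasingness its label is \emph{strictly} smaller than $l$. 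I would close the peak by re-firing $l_i\ra r_i$ from $w$ --- which $\Rsub$ permits, its equalization phase re-synchronising all copies of the rewritten variable --- and by propagating the lower step to the remaining copies on the $v$-side. Each auxiliary step so created has label $\ordl l$, while the re-firing has label $\leordl l$ and serves as the $l$-facing step, so the diagram is decreasing. Here the equalization phase of sub-rewriting is precisely what absorbs non-left-linearity.

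The critical case $q\in p\cdot\FPos{l}$ is the heart of the proof. Expanding the two sub-rewriting definitions presents the peak with a common top $\lin{l}\theta$ as in Lemma~\ref{l:equalization}, after reconciling the two equalization phases. That lemma then yields a solved form $S$ of $l|_p=g$ and a cyclic solution $\langle\gamma,R_S\rangle$ with $\gamma=\eta_S\rho$, $\sigma\lrdertwo{}{R}\gamma$, and $R_S$ realizable by $\gamma$; the hypothesis supplies a cyclic-joinable decreasing diagram $r\eta_S \dlrder{\langle 1,I\rangle}{R}s\cc{R_S\eta_S}t\drlder{\langle 1,J\rangle}{R}l[d]_p\eta_S$ for the realizable cyclic critical pair. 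I would instantiate this diagram by $\rho$, turn the congruence-closure part $\cc{R_S\eta_S}$ into a genuine $R$-valley using realizability of $R_S$ by $\gamma$, and splice it between the two actual endpoints via $\sigma\lrdertwo{}{R}\gamma$, so that $r\sigma\lrdertwo{}{R}r\gamma$ and $\lin{l}\theta[d\sigma]_p\lrdertwo{}{R}l[d]_p\gamma$. The two firing steps of the peak carry labels $\langle n{+}1,i\rangle=l$ and $\langle\rank{\lin{g}\theta},j\rangle=m$.

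The hard part will be to check that this spliced conversion is decreasing, because instantiation by $\rho$ raises every rank from $1$ (Corollary~\ref{c:cp}) up to at most $n{+}1$. The facts I would marshal are: (a) the connecting steps $\sigma\lrdertwo{}{R}\gamma$, the equalization steps hidden in the two firing steps, and the realizer steps for $R_S$ all occur strictly below the top redex layer, hence at rank $\leq n$, so their labels are $\ordl l$ and $\ordl m$; and (b) the steps inherited from the abstract diagram keep their rule index while their rank rises to at most $n{+}1$, so the decreasingness of $I,J$, expressed at rank $1$ purely through indices, is preserved by the lexicographic order, which compares ranks first and breaks ties by indices. When $p=\rootp$, so that $\Var{l[\cdot]_p}=\emptyset$, this transfer is immediate, and each abstract facing step either instantiates to a facing step or drops below $m$ (resp.\ $l$) and merges into the middle. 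When $\Var{l[\cdot]_p}\neq\emptyset$, the additional index condition $k<i$ forces \emph{every} instantiated step to satisfy $\langle\text{rank},k\rangle\ordl\langle n{+}1,i\rangle=l$, so the whole join can be placed in the middle of a decreasing diagram dominated by $l$; this is exactly what makes the diagram stable under the context instantiation $l[\cdot]_p\gamma$. Assembling the three cases discharges the induction and proves the theorem.
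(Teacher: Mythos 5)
Your proposal follows essentially the same route as the paper's proof: the same labelling $\langle \rank{u|_p},i\rangle$ with induction on rank via Corollary~\ref{c:confluence}, Huet's three-case split, the ancestor case closed by absorbing the unkeepable equalization steps into a single facing sub-rewriting step from $w$, and the critical case handled through Lemma~\ref{l:equalization}, instantiation of the cyclic-joinable decreasing diagram by $\rho$, replacement of the congruence-closure part by a rank-$\le n$ valley via realizability, and the same two sub-cases on $\Var{l[\cdot]_p}$. The argument is correct and matches the paper in all essential respects.
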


\begin{myproof}
Since $\lrps{}{R}\subseteq \lrps{}{\Rsub}\subseteq \dlrder{}{R}$,
$R$-convertibility and $\Rsub$-convertibility coincide. We can
therefore apply van Oostrom's theorem to $\Rsub$-conversions, and
reason by induction on the rank. We proceed by inspection of the
sub-rewriting local peaks $v \rlps{p}{(l\ra r)_R} u \lrps{q}{(g\ra
  d)_R} w$, with $\Var{l}\cap\Var{g}=\emptyset$. We also assume
for convenience that $\Var{l,g}\cap\Var{u,v,w}=\emptyset$. This allows us
to consider $u,v,w$ as ground terms by adding their variables as new constants.
We assume further that variables $x,y\in\Var{l,g}$ become linearized
variables $x^i,y^j$ in $\lin{l}, \lin{g}$, and that $\xi$ is the
substitution such that $\xi(x^i)\!=\!x$ and $\xi(y^j)\!=\!y$, hence
implying $\Var{\lin{l}}\!\cap\!\Var{\lin{g}} \!=\!\emptyset$.

\smallskip
By definition of sub-rewriting, $u|_p=\lin{l}\theta
\lrdertwo{(>\FPos{l})}{R} v'|_p=l\sigma$ and $v=u[r\sigma]_p$, where for
all positions $o\in\Pos{l}$ such that $l|_o=x$ and $\lin{l}|_o=x^i$,
then $x^i\theta \lrdertwo{}{R} x\sigma$.  Similarly, $u|_q=\lin{g}\theta
\lrdertwo{(>\FPos{g})}{R} w'|_q=g\sigma$ and $w=u[d\sigma]_q$, where for
all positions $o\in\Pos{g}$ such that $g|_o=y$ and $\lin{g}|_o=y^j$,
then $y^j\theta \lrdertwo{}{R} y\sigma$.
There are three cases:

\begin{enumerate}
\item
$p\# q$. The case of disjoint redexes is as usual.

\item
$q>p\cdot\FPos{l}$, the so-called ancestor peak case, for which
  sub-rewriting shows its strength. W.l.o.g. we assume $u|_p$ has some
  rank $n+1$ and note that $u|_q$ has some rank $m\le n$ by
  Corollary~\ref{c:rank}.  Since the sub-rewriting steps from $u$ to
  $w$ occur strictly below $p\cdot\FPos{l}$, then $q=p\cdot o\cdot
  q'$ where $l|_o=\xi(y^j)$ and $\lin{l}|_o=y^j$. It follows that
  $w=\lin{l}\tau$ for some $\tau$ which is equal to $\theta$ for all
  variables in $\lin{l}$ except $y^j$ for which
  $\tau(y^j)=\theta(y^j)[d\sigma]_{q'}$.

We proceed as follows: we equalize all $n$-convertible terms
$\{x\sigma:x\!\in\!\Var{r}\}$ in $v$ and
$\{y\tau:y\!\in\!\Var{\lin{l}}\}$ in $w$ by induction hypothesis,
yielding $s, t$. Note that steps from $v$ to $s$ have ranks strictly
less than the rank $n+1$ of the step $u\lrps{}{\Rsub}v$ by
Corollary~\ref{c:rank} and rank non-increasingness. Then, $t$ is an
instance of $l$ by some $\gamma$, and $s$ is the corresponding
instance of $r$, hence $t$ rewrites to $s$ with $l\ra r$. The
equalization steps from $w$ to $t$ have ranks which are not guaranteed
to be strictly less than $m$, hence cannot be kept to build a
decreasing diagram. But they can be absorbed in a sub-rewriting step
from $w$ to $s$ whose first label is at most $n+1$, hence faces the
step from $u\lrps{}{\Rsub}v$: sub-rewriting allows us to rewrite
directly from $w$ to $s$, short-cutting the rewrites from $w$ to $t$
that would otherwise yield a non-decreasing diagram. The proof is
depicted at Figure~\ref{fig:cpeak} (left), assuming $p=\rootp$ for
simplicity. Black color is used for the given sub-rewriting local
peak, blue for arrows whose redexes have ranks at most $n+1$, and red
when redex has rank at most $n$.

\begin{figure}[h]
\setlength{\unitlength}{0.34mm}
\centering
\begin{minipage}[b]{0.45\linewidth}
\begin{picture}(150,125)(-7,5)
\put(-85,43){
\begin{picture}(80,80)
\thicklines

\put(37,78){$u=\lin{l}\theta$}
\textcolor{red}{
\put(36,76){\vector(-3,-2){58}}
\put(36,76){\vector(-3,-2){55}}
}
\put(-32,31){$v'=l\sigma$}
\textcolor{blue}{
\put(-29,28){\vector(0,-1){40}}
}

\qbezier(35,79)(-60,70)(-32,-12)
\put(-32,-12){\vector(1,-3){0}}
\put(-40,65){\footnotesize $n+1$}
\put(-32,-18){$v=r\sigma$}

\put(44,76){\vector(3,-2){60}}
\put(73,58){\footnotesize $m\le n$}
\put(105,31){$w=\lin{l}\tau$}

\put(5,25){Ancestor peak}

\textcolor{red}{
\put(-29,-21){\vector(0,-1){40}}
\put(-29,-21){\vector(0,-1){37}}
\put(-27,-38){\footnotesize $\le n$}
}
\put(-32,-68){$s=r\gamma$}

\textcolor{red}{
\put(109,28){\vector(0,-1){88}}
\put(109,28){\vector(0,-1){85}}
}
\put(107,-68){$t=l\gamma$}

\textcolor{blue}{
\put(104,-67){\vector(-1,0){105}}
\qbezier(107,28)(90,-60)(-1,-65)
\put(-1,-65){\vector(-4,-1){0}}
\put(43,-35){\footnotesize $\le n+1$}
}
\end{picture}
}

\put(125,43){
\begin{picture}(80,80)
\thicklines

\put(37,78){$u=\lin{l}\theta=\lin{l}\theta[\lin{g}\theta]_q$}
\textcolor{red}{
\put(36,76){\vector(-3,-2){60}}
\put(36,76){\vector(-3,-2){57}}
}
\put(-32,31){$v'=l\sigma$}
\textcolor{blue}{
\put(-29,28){\vector(0,-1){40}}
}
\qbezier(35,79)(-90,60)(-32,-12)
\put(-32,-12){\vector(2,-3){0}}
\put(-37,68){\footnotesize $n+1$}
\put(-32,-18){$v=r\sigma$}

\textcolor{red}{
\put(44,76){\vector(3,-2){60}}
\put(44,76){\vector(3,-2){57}}
}
\put(105,31){$w'\!=\!\lin{l}\theta[g\sigma]_q$}

\textcolor{blue}{
\put(109,28){\vector(0,-1){40}}
}
\qbezier(44.5,77.5)(230,50)(112,-11)
\put(112,-11){\vector(-3,-2){0}}
\put(110,65){\footnotesize $m\leq n+1$}
\put(105,-18){$w=\lin{l}\theta[d\sigma]_q$}

\put(10,25){Critical peak}

\put(-5,-10){
\begin{picture}(80,80)

\put(14,1){\footnotesize $l\eta_S \cc{R_S\eta_S} l[g]_q\eta_S$}
\put(20,-3){\vector(-1,-1){18}}
\put(60,-3){\vector(1,-1){18}}
\put(-3,-27){\footnotesize $r\eta_S$}
\put(65,-27){\footnotesize $l[d]_q\eta_S$}

\textcolor{green}{
\put(0,-30){\vector(2,-1){32}}
\put(0,-30){\vector(2,-1){30}}
\put(80,-30){\vector(-2,-1){32}}
\put(80,-30){\vector(-2,-1){30}}
\put(33,-50){$\cc{R_S\eta_S}$}
}

\end{picture}}

\textcolor{red}{
\put(-29,-20){\vector(0,-1){38}}
\put(-29,-20){\vector(0,-1){36}}
}
\put(-32,-65){$s=r\gamma$}
\textcolor{red}{
\put(109,-20){\vector(0,-1){13}}
\put(109,-20){\vector(0,-1){11}}
}
\put(107,-41.5){$t'=l[d]_q\sigma$}
\textcolor{red}{
\put(109,-44){\vector(0,-1){13}}
\put(109,-44){\vector(0,-1){11}}
}
\put(107,-65){$t=l[d]_q\gamma$}

\textcolor{green}{
\put(-26,-68){\vector(2,-1){50}}
\put(-26,-68){\vector(2,-1){48}}
\put(106,-68){\vector(-2,-1){48}}
\put(106,-68){\vector(-2,-1){50}}
\put(24.5,-95){$\cc{R_S\eta_S\rho}$}
}

\textcolor{red}{
\put(22,-99){\vector(1,-1){13}}
\put(22,-99){\vector(1,-1){11}}
\put(52.5,-99){\vector(-1,-1){13}}
\put(52.5,-99){\vector(-1,-1){11}}
}
\end{picture}
}
\end{picture}
\bigskip\bigskip\bigskip\bigskip\bigskip\bigskip
\caption{Ancestor and Critical Peaks}
\label{fig:cpeak}
\end{minipage}
\vspace{-2mm}
\end{figure}
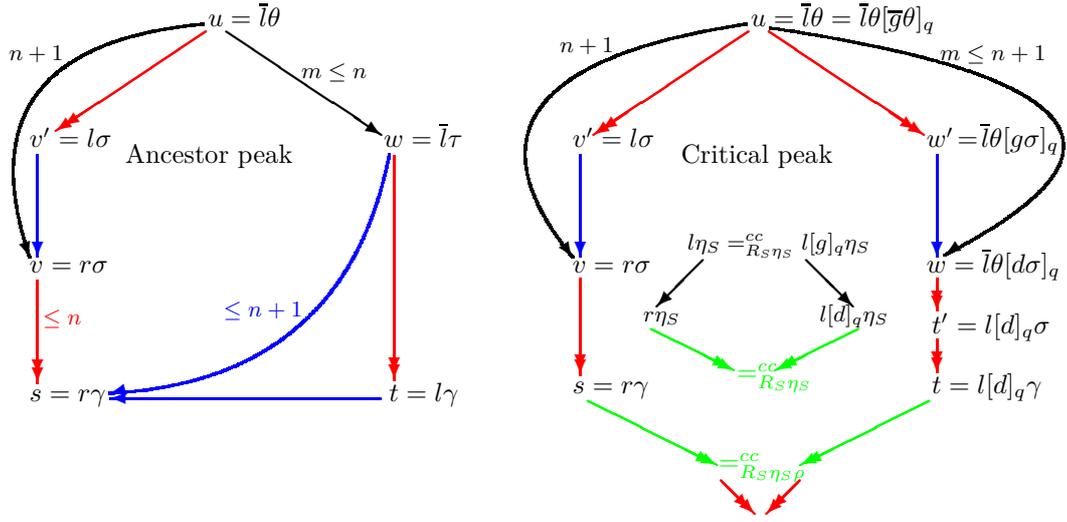

\item
$q\in p\cdot\FPos{l}$, the so-called critical peak case, whose left
  and right rewrite steps have labels $\langle n+1, i\rangle$ and
  $\langle m, j\rangle$ respectively, with rules $l\ra r$ and $g\ra d$
  having indexes $i$ and $j$. Assuming without loss of generality that
  $p=\rootp$, the proof is depicted at Figure~\ref{fig:cpeak}
  (right). Most technical difficulties here originate from the fact
  that the context $l[\cdot]_q$ may have variables. In this case, we
  first rewrite $w$ to $t'=l\sigma[d\sigma]_q=l[d]_q\sigma$ by
  replaying those equalization steps, of rank at most $n$, used in the
  derivation from $u$ to $v'$, which apply to variable positions in
  $\Var{\lin{l}[\cdot]_q}$.

Now, since $\lin{l}\theta=\lin{l}\theta[\lin{g}\theta]_q$, by
Lemma~\ref{l:equalization}, there is a substitution $\gamma$ and a
solved form $S$ of the unification problem $l|_q=g$, such that
$\sigma\lrdertwo{}{R}\gamma$, $\gamma=\eta_S\rho$ for some $\rho$, and
$R_S$ is realizable by $\gamma$. By assumption, the cyclic critical
pair $\langle r\eta_S, l[d]_q\eta_S \rangle$ has a cyclic-joinable
decreasing diagram (modulo $\cc{R_S\eta_S}$).  We can now lift this
diagram to the pair $\langle s,t\rangle$ by instantiation with the
substitution $\rho$. The congruence closure used in the lifted diagram
becomes therefore $\cc{R_S\eta_S\rho}$. We are left showing that the
obtained diagram for the pair $\langle v,w \rangle$ is decreasing with
respect to the local peak $v\la u\ra w$.

This diagram is made of three distinct parts: the equalization steps,
the rewrite steps instantiating the cyclic-joinability assumption with
$\rho$, which originate from $s$ and $t$ --we call them the middle
part--, and the congruence closure steps. By Corollary~\ref{c:rank},
the left equalization steps $v=r\sigma\lrdertwo{}{R}r\gamma=s$ use
rewrites with redexes of rank at most $n$, hence their labels are
strictly smaller than $\langle n+1,i\rangle$. The right equalization
steps $w\lrdertwo{}{}t'\lrdertwo{}{}t$ are considered together with
the (green-)middle-part rewrite steps. There are two cases 
depending on whether $l[\cdot]_q$ is variable-free or not:
\begin{enumerate}
\item
$\Var{l[\cdot]_q}=\emptyset$, hence $m=n+1$ by
  Corollary~\ref{c:rank}. In this case, $w=t'$, and by
  Corollary~\ref{c:rank}, the rewrite steps
  $w=l[d]_q\sigma\lrdertwo{}{R}l[d]_q\gamma=t$ have redexes of rank at
  most $n$, making their labels strictly smaller than $\langle
  m,j\rangle=\langle n+1,j\rangle$. Let us now consider the
  middle-part rewrite steps.  Thanks to rank non-increasingness, all
  terms in this part have rank at most $n+1$. It follows that
  the associated labels are pairs of the form $\{\langle n',i'\rangle
  : n'\le n+1, i'\in I\}$ on the left, or $\{\langle n',j'\rangle :
  n'\le n+1, j'\in J\}$ on the right. The assumption that $I,J$
  satisfy the decreasing diagram condition for the critical peak
  ensures that these rewrites do satisfy the decreasing diagram
  condition with respect to the local peak $v\la u\ra w$ as well.

\item
$\Var{l[\cdot]_q}\neq\emptyset$. By Corollary~\ref{c:rank}, the right
  equalization steps $w\lrdertwo{}{}t'\lrdertwo{}{}t$ have redexes of
  rank at most $n$, making their labels strictly smaller than $\langle
  n+1,i\rangle$. Consider now the middle part.  Thanks to rank
  non-increasingness and the additional condition on the
  cyclic-joinability assumption of the cyclic critical pair, all
  labels $\langle n',k\rangle$ in the middle part satisfy $n'\le n+1$
  and $k<i$, hence are strictly smaller than $\langle n+1,i\rangle$.
\end{enumerate}

\hide{
{\bf advanced version:}\\
with rank non-increasingness, all terms in that part have rank at most
$n+1$. With the additional conditions, each step $\mu\ra^o_k\nu$ using
rule $k$ at position $o$ in the cyclic-joinable decreasing diagram
should satisfy (i)~$\Var{\mu|_o}\subseteq\Var{g\eta_S}$ or (ii)~$k<i$.

- For the steps in the cyclic-joinable decreasing diagram which
satisfy (ii), their corresponding steps in the instantiated diagram
will have labels strictly smaller than $\langle n+1,i\rangle$.

- For each of those which do not satisfy (ii) but satisfy (i), it
should have a rank $k=j$ if it is the $j$-facing step, or a rank $k<j$
otherwise. Then the redex of the instantiated step
$\mu\rho\ra^o_k\nu\rho$ should have rank $\rank{\mu\rho|_o} =
\rank{\mu|_o\rho} = \rank{\mu|_o\rho_{|\Var{g\eta_S}}}$. By
Corollary~\ref{c:rank}, $\rank{\mu|_o\rho_{|\Var{g\eta_S}}} = 1 +
\rank{\rho_{|\Var{g\eta_S}}} = \rank{g\eta_S\rho} =
\rank{g\gamma}$. With rank non-increasingness, $\rank{g\gamma} \le
\rank{g\sigma} \le \rank{\lin{g}\theta} = m$. Hence the label $\langle
n',k\rangle$ of step $\mu\rho\ra^o_k\nu\rho$ satisfies $\langle
n',k\rangle \leordl \langle m,j\rangle$ if $\mu\ra\nu$ is $j$-facing
step, or $\langle n',k\rangle \ltordl \langle m,j\rangle$ otherwise.
}

We are left with the congruence closure steps. Given $y=v\in
R_S$, $y\gamma\lrdertwo{}{R}\rldertwo{}{R}v\gamma$ since $R_S$
is realizable by $\gamma$. By Lemma~\ref{l:unifinvar}, \OF{v} holds,
hence $y\gamma$ and $v\gamma$ are $n$-convertible by rank
non-increasingness. We are left with replacing the
$\cc{\vect{y}\gamma=\vect{v}\gamma}$-steps by a joinability diagram
whose all steps have rank at most $n$.  The obtained diagram is
therefore decreasing, which ends the proof.
\qed
\end{enumerate}
\end{myproof}

Using the improved condition of cyclic-joinability mentioned after
Definition~\ref{d:cpcheck} requires modifying the discussion
concerning the (green-)middle-part rewrite steps. Although this does
not cause any conceptual difficulties, it is technically delicate. The
interested reader can of course reconstruct this proof for
himself or herself.

Our result gives an answer to NKH: confluence of critical pair free 
rewrite systems can be analyzed via their sub-rewriting critical pairs, 
which are actually the cyclic critical pairs.

NKH is critical pair free but non-confluent. Indeed, it has the
$\Omega$ solved form $x=c(y)\land y=c(y)$ obtained by unifying
$f(x,x)=f(y,c(y))$. The cyclic critical peak is then $a \rlps{}{}
f(x,x) \cc{} f(y,c(y)) \lrps{}{} b$ yielding the cyclic critical pair
$\langle a, b\rangle$ which is not joinable modulo
$\{x=c(y),y=c(y)\}$.

We now give a slight modification of NKH making it confluent:
\begin{example}
\label{e:vhuet}
The system 
$R=\{f(x,x)\lrps{}{2} a(x,x),\, f(x,c(x))\lrps{}{2} b(x),\, 
f(c(x),c(x))\lrps{}{3} \\ f(x,c(x)),\, a(x,x)\lrps{}{1} e(x),\, 
b(x)\lrps{}{1} e(c(x)),\, g\lrps{}{0} c(g)\}$ is confluent.
Showing that $R$ satisfies \StOF is routine, and it is rank
non-increasing by Lemma~\ref{l:rkcond}. There are three cyclic
critical pairs, which all have a cyclic-joinable decreasing
diagram. For instance, the unification $f(x,x)=f(y,c(y))$ returns a
canonical cyclic unifier $\langle \eta_S=\emptyset, R_S=\{x\ra c(y),
y\ra c(y)\} \rangle$, the corresponding cyclic critical peak
$a(x,x)\rlps{\langle 1,2\rangle}{} f(x,x) \cc{R_S\eta_S}
f(y,c(y))\lrps{\langle 1,2\rangle}{} b(y)$ has a cyclic-joinable
decreasing diagram $a(x,x)\lrps{\langle 1,1\rangle}{}
e(x)\cc{R_S\eta_S} e(c(y)) \rlps{\langle 1,1\rangle}{} b(y)$.  The
unification $f(x,x)=f(c(y),c(y))$ returns $\langle \eta_S=\{x=c(y)\},
R_S=\emptyset \rangle$, the corresponding (normal) critical peak
$a(c(y),c(y))\rlps{\langle 1,2\rangle}{} f(c(y),c(y))\lrps{\langle
  1,3\rangle}{} f(y,c(y))$ decreases by $a(c(y),c(y))\lrps{\langle
  1,1\rangle}{} e(c(y))\rlps{\langle 1,1\rangle}{} b(y)\rlps{\langle
  1,2\rangle}{} f(y,c(y))$.  By Theorem~\ref{t:OLS}, $R$ is confluent.
\end{example}

Theorem~\ref{t:OLS} can be easily used positively: if all cyclic
critical pairs have cyclic-joinable decreasing diagrams, then
confluence is met. This was the case in Example~\ref{e:vhuet}. But
there is another positive use that we illustrate now: showing that
$\{f(x,x)\ra a, f(x,c(x))\ra b, g\ra d(g)\}$ is confluent requires
proving that the cyclic critical pair given by unifying the first two
rules is not realizable. Although realizability is undecidable in
general, this is the case here since there is no term $s$ convertible
to $c(s)$.  Theorem~\ref{t:OLS} can also be used negatively by
exhibiting some realizable cyclic critical pair which is not joinable:
this is the case of example NKH. In general, if some realizable cyclic
critical pair leading to a local peak is not joinable, then the system
is non-confluent. Whether a realizable cyclic critical pair always
yields a local peak is still an open problem which we had no time to
investigate yet.

A main assumption of our result is that rules may not increase the
rank. One can of course challenge this assumption, which could be due
to the proof method itself. The following counter-example shows that
it is not the case.

\begin{example}
Consider the critical pair free system $R=\{d(x,x) \ra 0,\,
f(x)\ra d(x,f(x)),\, \\c \ra f(c)\}$, which is layered but whose second
rule is rank increasing since $d(x^1, x^2)$ unifies with $d(y, f(y))$.
This system is non-confluent, since $f(fc) \lrps{}{} d(fc, ffc)
\lrps{}{} d(ffc, ffc) \lrps{}{} 0$ while $f(fc) \lrps{}{}f(d(c, fc))
\lrps{}{} f(d(fc, fc)) \lrps{}{} f0$ which generates the regular tree
language $\{S \ra d(0, S), S\ra f0\}$ not containing $0$. Note that
replacing the second rule by the right linear rule $f(x) \ra d(x,
f(c))$ yields a confluent system~\cite{DBLP:journals/ipl/SakaiO10}.
\end{example}

Releasing rank non-increasingness would indeed require strengthening
another assumption, possibly imposing left- or right-linearity. 

\section{Conclusion}
\label{s:conclusion}
Decreasing diagrams opened the way for generalizing Knuth and Bendix's
critical-pair test for confluence to non-terminating systems,
re-igniting these questions. Our results answer open
problems by allowing non-terminating rules which can
also be non-linear on the left as well as on the right. The notion of
layered systems is our first conceptual contribution here.

Another, technical contribution of our work is the notion of
sub-rewriting, which can indeed be compared to parallel
rewriting. Both relations contain plain rewriting, and are included in
its transitive closure. Both can therefore be used for studying
confluence of plain rewriting. Tait and Martin-L\"of's parallel
rewriting --as presented by Barendregt in his famous book on Lambda
Calculus~\cite{barendregt1984lambda}-- has been recognized as the
major tool for studying confluence of left-linear non-terminating
rewrite relations when they are not right-linear.  We believe that
sub-rewriting will be equally successful for studying confluence of
non-terminating rewrite relations that are not left-linear. In the
present work where no linearity assumption is made, assumption \StOF
ensuring the absence of stacked critical pairs in lefthand sides makes
the combined use of sub-rewriting and parallel rewriting
superfluous. Without that assumption, as is the case
in~\cite{DBLP:conf/rta/LiuDJ14}, their combined use becomes necessary.

A last contribution, both technical and conceptual, is the notion of
cyclic unifiers. Although their study is still preliminary, we have
shown that they constitute a powerful new tool to handle unification
problems with cyclic equations in the same way we deal with
unification problems without cyclic equations, thanks to the existence
of most general cyclic unifiers which generalize the usual notion of
mgu. This indeed opens the way to a \emph{uniform treatment} of
problems where unification, whether finite or infinite, plays a
central role.

Our long-term goal goes beyond improving the current toolkit for
carrying out confluence proofs for non-terminating rewrite systems. We
aim at designing new tools for showing confluence of complex type
theories (with dependent types, universes and dependent elimination
rules) directly on raw terms, which would ease the construction of
strongly normalizing models for typed terms. Since redex-depth, the
notion of rank used here, does not behave well for higher-order rules,
appropriate new notions of rank are required in that
setting. 

\smallskip
{\bf Acknowledgment:} This research is supported in part by NSFC
Programs (No. 91218302, 61272002) and MIIT IT funds (Research and
application of TCN key technologies) of China, and in part by JSPS,
KAKENHI Grant-in-Aid for Exploratory Research 25540003 of Japan.

\end{document}